\documentclass[10pt, conference]{IEEEtran}
\usepackage{amssymb}
\usepackage{amsthm}
\usepackage{latexsym}
\usepackage{color}
\usepackage[noend]{algorithmic}
\usepackage{cite}
\usepackage{multirow}
\usepackage{graphicx}
%\usepackage{caption}
%\DeclareCaptionType{copyrightbox}
%\usepackage{subcaption}
\usepackage{url}

\graphicspath{{ALLPLOTS/}{./}}

\begin{document}
%
% --- Author Metadata here ---
%\conferenceinfo{ICPP}{'14,  USA}
%\CopyrightYear{2007} % Allows default copyright year (20XX) to be over-ridden - IF NEED BE.
%\crdata{0-12345-67-8/90/01}  % Allows default copyright data (0-89791-88-6/97/05) to be over-ridden - IF NEED BE.
% --- End of Author Metadata ---

%\title{Alternate {\ttlit ACM} SIG Proceedings Paper in LaTeX
%Format\titlenote{(Produces the permission block, and
%copyright information). For use with
%SIG-ALTERNATE.CLS. Supported by ACM.}}
\title{Parallel Algorithms for Counting Triangles in Networks with Large Degrees}

\DeclareRobustCommand*{\IEEEauthorrefmark}[1]{\raisebox{0pt}[0pt][0pt]{\textsuperscript{\footnotesize\ensuremath{\ifcase#1\or *\or \dagger\or \ddagger\or%
    \mathsection\or \mathparagraph\or \|\or **\or \dagger\dagger%
    \or \ddagger\ddagger \else\textsuperscript{\expandafter\romannumeral#1}\fi}}}}

%\numberofauthors{1}

%\author{
%\alignauthor
%Shaikh Arifuzzaman\IEEEauthorrefmark{1}\IEEEauthorrefmark{2}, Maleq Khan\IEEEauthorrefmark{2}, and Madhav V.~Marathe\IEEEauthorrefmark{1}\IEEEauthorrefmark{2}\\
%       \affaddr{\IEEEauthorrefmark{1}Department of Computer Science}\\
%       \affaddr{\IEEEauthorrefmark{2}Network Dynamics and Simulation Science Laboratory}\\
%       \affaddr{Virginia Tech, Blacksburg, VA 24061, USA}\\
%       \email{\{sm10, maleq, mmarathe\}@vbi.vt.edu}
%}

\author{
\IEEEauthorblockN{Shaikh Arifuzzaman\IEEEauthorrefmark{1}\IEEEauthorrefmark{2},
Maleq Khan\IEEEauthorrefmark{1} and Madhav Marathe\IEEEauthorrefmark{1}\IEEEauthorrefmark{2}}
\IEEEauthorblockA{
\IEEEauthorblockA{\IEEEauthorrefmark{1}Network Dynamics \& Simulation Science Laboratory, Virginia Bioinformatics Institute\\}
\IEEEauthorblockA{\IEEEauthorrefmark{2}Department of Computer Science\\}
Virginia Tech, Blacksburg, Virginia 24061 USA \\
Email: \texttt{\{sm10, maleq, mmarathe\}@vbi.vt.edu}}
}

%\author{Submitted for Blind Review}
% There's nothing stopping you putting the seventh, eighth, etc.
% author on the opening page (as the 'third row') but we ask,
% for aesthetic reasons that you place these 'additional authors'
% in the \additional authors block, viz.
%\additionalauthors{Additional authors: John Smith (The Th{\o}rv{\"a}ld Group,
%email: {\texttt{jsmith@affiliation.org}}) and Julius P.~Kumquat
%(The Kumquat Consortium, email: {\texttt{jpkumquat@consortium.net}}).}
%\date{30 July 1999}
% Just remember to make sure that the TOTAL number of authors
% is the number that will appear on the first page PLUS the
% number that will appear in the \additionalauthors section.

\maketitle

\newcommand{\todo}[1]{{\textcolor{red}{\textbf{To do:} #1}}}
\newcommand{\red}[1]{{\textcolor{red}{ #1}}}
\newcommand{\fsc}{\vspace{-7pt}}  %before caption (in figure)
\newcommand{\afsc}{\vspace{-5pt}}   %after caption (in figure)
\newcommand{\isc}{	\setlength{\itemsep}{2pt}   % item space control
  					\setlength{\parskip}{0pt}
  					\setlength{\parsep}{0pt}
  					\setlength\itemindent{-2pt}
				}

\begin{abstract}
\iffalse
%%%%%
In recent years, massive networks (graphs) emerged from numerous areas
have posed significant challenges in mining and analysis of
networks. These networks consist of millions, or even billions, of
nodes and edges. Such large networks necessitate the development of
parallel algorithms. 
%%%%%%
\fi
Finding the number of triangles in a network is
an important problem in the analysis of complex networks. The number
of triangles also has important applications in data mining. Existing
distributed memory parallel algorithms for counting triangles are
either Map-Reduce based or message passing interface (MPI) based and
work with overlapping partitions of the given network. These
algorithms are designed for very sparse networks and do not work well
when the degrees of the nodes are relatively larger. For networks with
larger degrees, Map-Reduce based algorithm generates prohibitively
large intermediate data, and in MPI based algorithms with overlapping
partitions, each partition can grow as large as the original network,
wiping out the benefit of partitioning the network.

In this paper, we present two efficient MPI-based parallel algorithms
for counting triangles in massive networks with large degrees. The
first algorithm is a space-efficient algorithm for networks that
do not fit in the main memory of a single compute node. This
algorithm divides the network into non-overlapping partitions. The
second algorithm is for the case where the main memory of each node
is large enough to contain the entire network. We observe that for
such a case, computation load can be balanced dynamically and present
a dynamic load balancing scheme which improves the performance
significantly. Both of our algorithms scale well to large networks
and to a large number of processors.

\end{abstract}

% A category with the (minimum) three required fields

%\category{G.2.2}{Discrete Mathematics}{Graph Theory}[Graph Algorithms]
%\category{D.1}{Programming Techniques}{Parallel Programming}
%\category{H.2.8}{Database Management}{Database Applications}[Data Mining]

%\terms{Algorithm, Experimentation, Performance}

\begin{IEEEkeywords}
triangle-counting, parallel algorithms, massive networks, social networks, graph mining.

\end{IEEEkeywords}

%\keywords{triangle-counting, clustering-coefficient, massive networks, parallel algorithms, social networks, graph mining}

\newtheorem{theorem}{Theorem}
\newtheorem{definition}{Definition}

\section{Introduction}
\label{sec:intro}
Counting triangles in a network is a fundamental and important algorithmic problem in the analysis of complex networks, and its solution can be used in solving many other problems such as the computation of clustering coefficient, transitivity, and triangular connectivity \cite{alon-network-motifs, chu11triangle}. Existence of triangles and the resulting high clustering coefficient in a social network reflect some common theories of social science, e.g., \textit{homophily} where people become friends with those similar to themselves and \textit{triadic closure} where people who have common friends tend to be friends themselves \cite{socialMiller}. Further, triangle counting has important
applications in graph mining such as detecting spamming activity and
assessing content quality \cite{BECC}, uncovering the thematic structure of the web \cite{eckmann2002theme}, and query
planning optimization in databases \cite{YOSSEFF}.

Network is a powerful abstraction for representing underlying relations 
in large unstructured datasets. Examples include  the web graph \cite{Broder-web}, various social networks, e.g., Facebook, Twitter \cite{twitter}, collaboration networks \cite{collaborationNet}, infrastructure networks (e.g., transportation networks, telephone networks) and  biological networks \cite{bioNet}. In the present world of technological advancement, we are deluged with data from a wide range of areas such as business and finance \cite{ApteBusinessData}, computational biology \cite{ChenBioData} and social science. Many social networks have millions to billions of users \cite{tsourakakis2009doulion, chu11triangle}. This motivates the need for space efficient parallel algorithms. 

Counting triangles and related problems such as computing clustering coefficients has a rich history \cite{ALON, SCHANK, LATAPY, tsourakakis2009doulion, SURI}. Much of the earlier algorithms are mainly based on matrix multiplication and adjacency matrix representation of the network. Matrix based algorithms \cite{ALON} are not useful in the analysis of massive networks for their prohibitively large memory requirements. In the last decade many algorithms \cite{LATAPY, SCHANK, seshadhri13triangle} have been developed using adjacency list representations. Despite the fairly large volume of work addressing this problem, only recently has attention been given to the problems associated with massive networks. Several techniques can be employed to deal with such massive graphs: streaming algorithms
\cite{triangle-paradox, srikanta-cikm}, sparsification based algorithms \cite{tsourakakis2009doulion, seshadhri13triangle}, external-memory algorithms \cite{chu11triangle}, and parallel algorithms \cite{Patric-triangle, SURI, pinar-mapreduce, srikanta-cikm}. The streaming and sparsification based algorithms are approximation algorithms.  External memory algorithms can be very I/O intensive leading to a large runtime. Efficient parallel algorithms can solve such a problem of a large running time by distributing computing tasks to multiple processors. Over the last couple of years, several parallel algorithms, either shared memory or distributed memory (MapReduce or MPI) based, have been proposed.

\iffalse
In \cite{srikanta-cikm}, a shared memory parallel algorithm is
proposed for counting triangles in a streaming setting. Although the
algorithms for shared memory paradigm are useful, the availability of
a shared memory system with a very large memory and the overhead for
locking and synchronization mechanism required for concurrent read and
write access to shared data might restrict its application to massive networks.
\fi

A shared memory parallel algorithm is
proposed in \cite{srikanta-cikm} for counting triangles in a streaming setting, which is an approximation algorithm. In \cite{SURI}, two parallel algorithms for exact triangle counting using the MapReduce framework are presented. The first algorithm generates huge volumes of intermediate data, which are all possible 2-paths centered at each node. Shuffling and regrouping these 2-paths require a significantly large amount of time and memory. The second algorithm suffers from redundant counting of triangles. An improvement of the second algorithm is given in a
very recent paper \cite{park-cikm}. Although this algorithm reduces
the redundant counting to some extent, the redundancy is not entirely
eliminated. A MapReduce based parallelization of a wedge-based sampling
technique \cite{seshadhri13triangle} is proposed in \cite{pinar-mapreduce}, which is an approximation algorithm. MapReduce framework provides several advantages such as fault tolerance, abstraction of parallel
computing mechanisms, and ease of developing a quick prototype or program.  However, the overhead for doing so results in a larger runtime. On the other hand, MPI based systems provide the advantages of defining and controlling parallelism from a granular level, implementing application specific optimizations such as load
balancing, memory and message optimization.

\iffalse
The paper presents methods for computing total triangles, global clustering coefficients, and degree-wise clustering coefficients. However, these results are approximate, and local clustering coefficients and triangle counts cannot be computed by these methods. 

Another distributed memory parallel algorithm \cite{park-cikm} tends to improve the MapReduce based algorithm proposed in \cite{SURI} by employing a different partitioning scheme. Though this scheme reduces the redundant count of \cite{SURI} to some extent, the redundancy is not entirely eliminated. In fact, for $\rho$ partitions, the algorithm still counts a particular type of triangles (triangles whose vertices lie in the same partition) $(\rho -1)$ times. Further, the expected size of the output from all Map instances is $O(m\rho)$ ($m$ is the number of edges) which is significantly larger than the size of the original network.
\fi

A recent paper \cite{Patric-triangle} proposes an MPI based parallel algorithm for counting the \textit{exact} number of triangles in massive networks. The algorithm employs an overlapping partitioning scheme and a novel load balancing scheme. Although this algorithm works very well on very sparse networks, it is not suitable for networks with large degrees. The size of the partitions grows quadratically with the increase of the degrees of the nodes. As a result, for a network with large degrees, the partitions can grow as large as the whole network, wiping out the benefit of partitioning the network. 
%These algorithms are designed for very sparse networks and do not work well when the degrees of the nodes are relatively larger. 

\iffalse
Even for a sparse network with small average degree, if there are few nodes with large degrees, some partitions can be very large. For example, consider a node $v$ with degree $n-1$, where $n$ is the number nodes in the network, the partition containing node $v$ will equal to the whole network.

For networks with larger degrees, Map-Reduce based algorithm generates prohibitively large intermediate data, and in MPI based algorithms with overlapping partitions, each partition can grow as large as the original network, wiping out the benefit of partitioning the network. 
\fi

We present two efficient MPI-based parallel algorithms for finding the exact number of triangles in networks with large degrees. The first algorithm is a space efficient algorithm for massive networks that do not fit in the memory of a single computing machine. This algorithm divides the network into non-overlapping partitions. Not only this algorithm is suitable for networks with large degrees, even for networks with smaller degrees, it can work on larger networks than that of the algorithm in \cite{Patric-triangle} as the non-overlapping partitioning scheme leads to significantly smaller partitions. The second algorithm is for the case where the memory of each machine is large enough to contain the entire network. For such a case, we present a parallel algorithm with a dynamic load balancing scheme, which improves the performance significantly. Both of our algorithms scale well to large networks and to a large number of processors.

The rest of the paper is organized as follows. The preliminary concepts, notations and datasets are briefly described in Section \ref{sec:prelim}. In Section \ref{sec:background}, we discuss some background work on counting triangles. We present our parallel algorithms in Section \ref{sec:space} and \ref{sec:dynamic} and conclude in Section \ref{sec:conc}.

%The algorithm outperforms, both in time and space, previously known  only distributed memory parallel algorithm \cite{SURI} that provides exact counts.

\section{Preliminaries}
\label{sec:prelim}

Below are the notations, definitions, datasets, and experimental setup used in this paper.\\ \vspace{-.1in}

%\subsection{Notations}
\textbf{Basic definitions.} 
The given network is denoted by $G(V,E)$, where $V$ and $E$ are the sets of vertices and edges, respectively, with $m = |E|$ edges and $n = |V|$ vertices labeled as $0, 1, 2, \dots, n-1$. We use the words \emph{node} and \emph{vertex} interchangeably. We assume that the input graph is undirected. If $(u,v)\in E$, we say $u$ and $v$ are neighbors of each other. The set of all neighbors of $v \in V$ is denoted by $\mathcal{N}_v$, i.e.,  $\mathcal{N}_v=\{u \in V | (u,v) \in E\}$. The degree of $v$ is $d_v = |\mathcal{N}_v|$.

A triangle is a set of three nodes $u,v,w \in V$ such that there is an edge between each pair of these three nodes, i.e., $(u,v), (v,w), (w,u) \in E$. The number of triangles containing node $v$ (in other words, triangles incident on $v$) is denoted by $T_v$. Notice that the number of triangles containing node $v$ is same as the number of edges among the neighbors of $v$, i.e., %$T_v = |\left\{(u,w) \in E \mid u, w \in N_v\right\}|$.
\vspace{-.1in}
\begin{eqnarray*}
T_v = |\left\{(u,w) \in E \mid u, w \in \mathcal{N}_v \right\}|.
\end{eqnarray*}

We use K, M and B to denote thousands, millions and billions, respectively; e.g., 1B stands for one billion.

%\subsection{Datasets}
\textbf{Datasets.} 
We used both real world and artificially generated networks. A summary of all the networks is provided in Table \ref{table:dataset}. Twitter data set is available at \cite{twitterData}, and web-BerkStan, LiveJournal and web-Google networks are at SNAP library \cite{SNAP}. Miami \cite{miamiRef} is a synthetic, but realistic, social contact network for Miami city. Note that the web-BerkStan, web-Google, LiveJournal and Twitter networks have very skewed degree distribution, i.e, some nodes have very large degrees.
%: each node is a person from Miami city, and there is an edge between two persons if they ``physically" interact with each other within a 24 hour period.
Artificial network  PA($n,d$) is generated using preferential attachment (PA) model \cite{barabasi99} with $n$ nodes and average degree $d$. PA($n,d$) has power-law degree distribution, which is a very skewed distribution. Networks having some nodes with high degrees create difficulty in partitioning and balancing loads and thus give us a chance to measure the performance of our algorithms in some of the worst case scenarios.
%Network Gnp($n,d$) is generated using the Erd\H{o}s-R\'{e}yni random graph model \cite{bollobas01Erdos}, also known as $G(n,p)$ model, with $n$ nodes and edge probability $p = \frac{d}{n-1}$ so that the expected degree of each node is $d$. 

\begin{table}
\caption{ Dataset used in our experiments}
\label{table:dataset}
\centering
    \begin{tabular}{ | l | l | l | l |}
    \hline
    {\bf Network} & {\bf Nodes} & {\bf Edges} & {\bf Source} \\ \hline
	%Email-Enron   &  $37$K    &   $0.36$M     & SNAP \cite{SNAP}\\ \hline
	web-Google   &  $0.88$M    &   $5.1$M     & SNAP \cite{SNAP}\\ \hline
	web-BerkStan  &  $0.69$M    &   $13$M   & SNAP \cite{SNAP}\\ \hline
	Miami         &  $2.1$M    &   $100$M     & \cite{miamiRef}\\ \hline
	LiveJournal   &  $4.8$M    &  $86$M     & SNAP \cite{SNAP}\\ \hline
	Twitter       &  $42$M   &     $2.4$B  & \cite{twitterData}\\ \hline
	%Gnp($n,d$)    &  $n$      &   $\frac{1}{2}nd$     & Erd\H{o}s-R\'{e}yni \\ \hline
	PA($n,d$)     &  $n$     &     $\frac{1}{2}nd$    & Pref. Attachment \\ \hline
	\end{tabular}
\end{table}

%\subsection{Computation Model}
%\label{sec:comp_model}
\textbf{Computation Model.} 
We develop parallel algorithms for message passing interface (MPI) based distributed-memory parallel systems, where each processor has its own local memory. The processors do not have any shared memory; one processor cannot access the local memory of another processor, and the processors communicate via exchanging messages using MPI.

%\subsection*{}
%\label{sec:experiment}
%\textbf{System Specifications.} 
We perform our experiments using a computing cluster (Dell C6100) with 30 computing nodes and 12 processors (Intel Xeon X5670, 2.93GHz) per node. The memory per processor is 4GB, and the operating system is SLES 11.1.

\section{A Background on Counting Triangles}
\label{sec:background}

%Counting triangles for large networks is prohibitively expensive both in memory and time, and thus, efficient parallel algorithms are required. 
Our parallel algorithms are based on the state-of-the-art sequential algorithm for counting triangles. In this section, we describe the sequential algorithm and some background of our parallel algorithms. 

\subsection{Efficient Sequential Algorithm}

A na\"{\i}ve approach to count triangles in a graph $G(V,E)$ is to check, for all possible triples $(u,v,w)$, $u,v,w \in V$, whether $(u,v,w)$ forms a triangle; i.e., check if $(u,v), (v,w), (u,w) \in E$. There are  ${n \choose 3}$ such triples, and thus this algorithm takes $\Omega(n^3)$ time. A simple but efficient algorithm for counting triangles is: for each node $v\in V$, find the number of edges among its neighbors, i.e., the number of pairs of neighbors that complete a triangle with vertex $v$. In this method, each triangle $(u,v,w)$ is counted six times -- all six permutations of $u, v$, and $w$. The algorithms presented in \cite{LATAPY, SCHANK}  uses a total ordering $\prec$ of the nodes to avoid duplicate counts of the same triangle. Any arbitrary ordering of the nodes, e.g., ordering the nodes based on their IDs, makes sure each triangle is counted exactly once -- counts only one permutation among the six possible permutations. However, algorithms in \cite{LATAPY, SCHANK} incorporate an interesting node ordering based on the degrees of the nodes, with ties broken by node IDs, as defined as follows: $ u \prec v \iff d_u < d_v \mbox{ or } (d_u = d_v \mbox{ and } u < v)$.

These algorithms are further improved in a recent paper \cite{Patric-triangle} by a simple modification. The algorithm \cite{Patric-triangle} defines $N_x \subseteq \mathcal{N}_x $ as the set of neighbors of $x$ having a higher order $\prec$ than $x$. For an edge $(u,v)$, the algorithm stores $v$ in $N_u$ if {$u \prec v$}, and $u$ in $N_v$, otherwise, i.e., $N_v = \{u: (u,v)\in E, v \prec u \}$.  Then the triangles containing node $v$ and any $u \in N_v$ can be found by set intersection $N_u \cap N_v$. Now let, $\hat{d_v} = |N_v|$ be the effective degree of node $v$. The cost for computing $N_u \cap N_v$ requires $O(\hat{d_v} + \hat{d_u})$ time when $N_v$ and $N_u$ are sorted.
The above state-of-the-art sequential algorithm is presented in Fig. \ref{algo:serial}. Our parallel algorithms are based on this sequential algorithm.

%%%%%%%%%%% Algo NodeIterator++
\begin{figure}
\begin{center}
\fbox{
\begin{minipage}[c] {0.8\linewidth}
\begin{algorithmic}[1]
%\STATE \{Preprocessing: Step 2-6\}
\FOR {each edge $(u,v)$}
    \STATE if {$u \prec v$}, store $v$ in $N_u$
    \STATE else store $u$ in $N_v$
\ENDFOR
\FOR {$v \in V$}
	\STATE sort $N_v$ in ascending order
\ENDFOR
\STATE $T \leftarrow 0$ \hspace{0.2in} \{$T$ is the count of triangles\}
\FOR {$v \in V$}
	\FOR {$u \in N_v$}
		\STATE $S \leftarrow N_v \cap N_u$
		\STATE $T \leftarrow T+|S|$
	\ENDFOR
\ENDFOR
\end{algorithmic}
\end{minipage}
}
\end{center}
\fsc
\caption{The state-of-the-art sequential algorithm for counting triangles.}
\label{algo:serial}
\afsc
\end{figure}

\subsection{Related Parallel Algorithms}

In Section \ref{sec:intro} we discussed a few parallel algorithms \cite{Patric-triangle, SURI, pinar-mapreduce} which deal with massive networks. The most relevant to our work is the parallel algorithm presented in \cite{Patric-triangle}. The algorithm \cite{Patric-triangle} divides the input graph into a set of overlapping partitions. Let $P$ be the number of processors used in the computation. A partition (subgraph) $G_i$ is constructed as follows. Set of nodes $V$ is partitioned into $P$ disjoint subsets $V_0^c, V_1^c, \dots, V_{P-1}^c$, such that, for any $j$ and $k$, $V_j^c \cap V_k^c = \emptyset$ and $\bigcup_{k} V_k^c = V$. Then, a set $V_i$ is constructed containing all nodes in $V_i^c$ and $\bigcup_{v \in V_i^c} N_v$. The partition $G_i$ is a subgraph of $G$ induced by $V_i$. Each processor $i$ works on partition $G_i$. The node set $V_i^c$ and $N_v$ for $v\in V_i^c$ constitute the disjoint (non-overlapping) portion of the partition $G_i$. The node set $V_i-V_i^c$ and $N_u$ for $u\in V_i-V_i^c$ constitute the overlapping portion of the partition $G_i$.

%Edge set $E_i \subset E$ is the set of edges $(u,v)$ such that $u,v \in V_i$ and $(u,v) \in E$. Thus, the node set $V_i-V_i^c$ and the edge set $(u,v) \in E$ such that $u,v \in V_i-V_i^c$ constitute the overlapping portion in partition $i$. The node set $V_i^c$ and $N_v$ for $v\in V_i^c$ constitute the disjoint (non-overlapping) portion of the partition $i$.

This scheme works very well for sparse networks; however, when the average degrees of networks increase, each partition can be very large. Assuming an average degree $\bar{d}$ of the network, the algorithm in \cite{Patric-triangle} has a space requirement of $\Omega(n\bar{d}/P)$ for storing disjoint portion of the partition. The space needed for storing the whole  partition is $\Omega(xn\bar{d}/P)$ where $1 \leq x \leq \bar{d}$ which can be very large. In many real world networks average degrees are large, e.g., Facebook users have an average of $190$ friends \cite{fb-stat}. 
\iffalse
Social contact networks, where there is an edge between two nodes if the corresponding persons meet (or come in close proximity) in a pre-defined time frame, may also have large average degrees \cite{miamiRef}.   
\fi

We observe that even for a sparse network with small average degree, if there are few nodes with large degrees, say $O(n)$, some partitions can be very large. For example, consider a node $v$ with degree $n-1$, where $n$ is the number nodes in the network, the partition containing node $v$ will be equal to the whole network. Some real networks have very skewed degree distributions where some nodes have very large degrees.

In the first algorithm presented in this paper, we divide the input networks into non-overlapping partitions. Each partition is almost equal and has approximately $m/P$ edges, which can be significantly smaller (in some cases, as much as $\bar{d}$ times smaller) than the overlapping partition in \cite{Patric-triangle}. As a result, our algorithm can work with networks with large degrees. Since the partitions are significantly smaller, even for a network with smaller degree the algorithm can work on larger networks than \cite{Patric-triangle}. Table \ref{table:mem-comp} shows the space requirement (in MB) of our algorithm and the algorithm in \cite{Patric-triangle}. 

%As a result, our algorithm can work with networks with large degrees. Since the partitions is significantly smaller, even for a network with smaller degree the algorithm can work on larger networks than \cite{Patric-triangle}.  
 
%%% Table for comparing space requirement between our algo and PATRIC

\begin{table}[!htb]
\centering
\caption{Memory usage of our algoirthm and \cite{Patric-triangle} for storing the largest partition. Number of partitions used is $100$.}
\label{table:mem-comp}
\begin{tabular}{|l|l|l|l|} \hline
\multirow{2}{*}{ \textbf{Networks}} &  \multicolumn{2}{|c|}{\textbf{Memory (MB)}} & \multirow{2}{*}{\textbf{Avg. Degree}} \\
\cline{2-3}
 & Our algo. & Algo. in \cite{Patric-triangle} &  \\ \hline
	Miami         &    $10.63$     & $36.56$ & $47.6$    \\ \hline %200 proc\\ \hline	
	web-Google   &    $1.49$    & $5.65$  & $11.6$   \\ \hline %200 proc\\ \hline
	LiveJournal   &    $9.41$    & $22.15$  & $18$   \\ \hline %200 proc\\ \hline
	Twitter   &      $265.82$ &  $6876.25$ &  $57.14$  \\ \hline %8 hrs,state-of-the-art\\ \hline	
	PA(10M, 100)         &    $121.11$     & $2120.94$ &  $100$  \\ \hline %200 proc\\ \hline
    \end{tabular}
\end{table}

Now consider the case that the size of an overlapping partition is equal (or almost equal) to the whole network and each computing machine has enough space available for storing the partition, and consequently the whole network. For such a case, we observe that a dynamic load balancing scheme can make the computation even faster and present an efficient parallel algorithm with dynamic load balancing, which is significantly faster than the algorithm in \cite{Patric-triangle}. We present our parallel algorithms in the following sections.

\section{A space efficient parallel algorithm}
\label{sec:space}

In this section, we present our space-efficient parallel algorithm for counting triangles. At first, we present the overview of the algorithm. The detailed description of the algorithm follows thereafter. 

\subsection{Overview of the Algorithm}
\label{sec:space-ov}

Let $P$ be the number of processors used in the computation. The algorithm partitions the input network $G(V,E)$ into a set of $P$ subgraphs $G_i$. Informally, the subgraph $G_i$ is constructed as follows: set of nodes $V$ is partitioned into $P$ mutually disjoint subsets $V_k$s, for $0 \leq k \leq P-1$, such that, $\bigcup_{k} V_k = V$. Node set $V_i$, along with $N_v$ for all $v\in V_i$, constitutes the subgraph $G_i$ (Definition \ref{dfn:subgraph}). Each processor is assigned one such subgraph (partition) $G_i$. Now, to count triangles incident on $v \in V_i$, processor $i$ needs $N_u$ for all $u \in N_v$. If $u\in V_i$, processor $i$ counts triangles incident on $(v,u)$ by computing $N_u \cap N_v$. However, if $u \in V_j$, $j \neq i$, then $N_u$ resides in processor $j$. In such a case, processor $i$ and $j$ exchange message(s) to count triangles adjacent to edge $(v,u)$. There are several non-trivial issues involving this exchange of messages, which can crucially affect the performance of the algorithm. We devise an efficient communication scheme to reduce the communication overhead and improve the performance of the algorithm. Once all processors complete the computation associated to respective partitions, the counts from all processors are aggregated.

\begin{definition} \label{dfn:subgraph}
\textbf{A non-overlapping partition:} Given a graph $G(V, E)$ and an integer $P\geq 1$ denoting the number of partitions, a non-overlapping partition for our algorithm, denoted by $G_i(V_i', E_i')$, is a subgraph of $G$ which is constructed as follows.
\begin{itemize}
\item $V$ is partitioned into $P$ disjoint subsets $V_i$s of consecutive nodes, such that, for any $j$ and $k$, $V_j \cap V_k = \emptyset$ and $\bigcup_{k} V_k = V$
\item $V_i'=V_i \cup \{v:v\in N_u, u\in V_i\}$
\item  $E_i' = \{(u,v):u\in V_i, v\in N_u\}$
\end{itemize}
\end{definition}

The subgraphs (partitions) $G_i$s are non-overlapping-- each edge $(u,v) \in E$ resides in one and only one partition. For any $j$ and $k$, $E_j' \cap E_k' = \emptyset$ and $\bigcup_{k} E_k' = E$. The sum of space required to store all partitions equals to the space required to store the whole network.

Our algorithm exchanges two types of messages-- data message and completion notifier. A message is denoted by $\left\langle t, X \right\rangle $ where $t \in \{data, completion\}$ is the type of the message and $X$ is the actual data associated with the message. For a data message ($t=data$), $X$ refers to a neighbor list, whereas for a completion notifier ($t=completion$), the value of $X$ is disregarded. We describe the details of our algorithm in the following subsections.

\subsection{Computing Partitions}
\label{sec:space-part}

While constructing partitions $G_i$, set of nodes $V$ is partitioned into $P$ disjoint subsets $V_i$s (Definition \ref{dfn:subgraph}). How the nodes in $V$ are distributed among the sets $V_i$ for all processors $i$ crucially affect the performance of the algorithm. Ideally, the set $V$ should be partitioned in such a way that the cost for counting triangles is almost equal for all partitions. Let, $f(v)$ be the cost for counting triangles incident on a node $v\in V$(cost for executing Line 7-10 in Fig. \ref{algo:serial}). We need to compute $P$ disjoint partitions of node set $V$ such that for each partition $V_i$, 
\begin{eqnarray*}
\sum_{v \in V_i}{f(v)} \approx \frac{1}{P}\sum_{v \in V}{f(v)}.
\end{eqnarray*}
We realize that the parallel algorithm for computing balanced partitions of $V$ proposed in \cite{Patric-triangle} is applicable to our problem. The paper \cite{Patric-triangle} proposed several estimations for $f(v)$, among which $f(v) = \sum_{ u\in N_v}{(\hat{d_v} + \hat{d_u})}$ is shown experimentally as the best. Since our algorithm employs a different communication scheme for counting triangles, none of those estimations corresponds to the cost of our algorithm. Thus, we compute a different function $f(v)$ to estimate the computational cost of our algorithm more precisely (in Section \ref{sec:new_fv}). We use this function to compute balanced partitions. Once all $P$ partitions are computed, each processor is assigned one such partition.

\subsection{Counting Triangles with An Efficient Communication Scheme}
\label{sec:space-tc}

As we discussed in Section \ref{sec:space-ov}, processor $i$ stores $N_v$ for all $v \in V_i$. However, to compute triangles incident on $v \in V_i$, $N_u$ for all $u \in N_v$ are also required. Now, if $u \in V_j$, $j \neq i$, then $N_u$ resides in processor $j$. A simple approach to resolve the issue is as follows.

\textbf{(The Direct Approach)} \textit{Processor $i$ requests processor $j$ for $N_u$. Upon receiving the request, processor $j$ sends $N_u$ to processor $i$. Processor $i$ counts triangles incident on the edge $(v,u)$ by computing $N_v \cap N_u$.}
%as given in the algorithm in figure \ref{algo:serial}.}

The direct approach has a high communication overhead due to exchanging redundant messages. Assume $u \in N_v$ and $u \in N_w$ for $v,w \in V_i$. Then processor $i$ sends two separate requests for $N_u$ to processor $j$ while computing triangles incident on $v$ and $w$, respectively. In response to the above requests, processor $j$ sends same message $N_u$ to processor $i$ twice. Such redundant messages increases the communication overhead leading to poor performance of the algorithm. 

One way to eliminate redundant messages is that instead of requesting $N_u$s multiple times, processor $i$ stores them in memory after fetching them for the first time. Before sending a request, processor $i$ performs a lookup into the already fetched lists of $N_u$s. A new request for $N_u$ is made only when it is not already fetched. However, the space requirement for storing all $N_u$s along with $G_i$ is same as that of storing an overlapping partition. This diminishes our original goal of a space-efficient algorithm.

Another way to eliminate message redundancy is as follows. When a neighbor list $N_u$ is fetched, processor $i$ scans through $G_i$ to find all nodes $v \in V_i$ such that $u \in N_v$. Processor $i$, then, performs all computations related to $N_u$ (i.e., $N_v \cap N_u$). Once these computations are done, the neighbor list $N_u$ is never needed again and can be discarded. However, scanning through the whole partition $G_i$ for each fetched list $N_u$ might be very expensive, which is even more expensive than the direct approach with redundant messages.

Since all of the above techniques compromise either runtime or space efficiency, we introduce another communication scheme for counting triangles involving nodes $v\in V_i$ and $u\in V_j$, such that $u \in N_v$.

\textbf{(The Surrogate Approach)} \textit{Processor $i$ sends  $N_v$ to processor $j$.  Processor $j$ scans $N_v$ to find all nodes $u\in N_v$ such that $u\in V_j$. For all such nodes $u$, processor $j$ counts triangles incident on edge $(u,v)$  by performing the operation $N_v \cap N_u$.}\\
\vspace{-.1in}

\iffalse
For a node $v\in V_i$, say $u \in N_v$ is the first node processor $i$ encountered such that $v \in V_j$. Processor $i$ sends $N_v$ to processor $j$. Now if processor $j$ encounter another node $u'\in N_v$ such that $u' \in V_j$, processor $i$ does not send $N_v$ to processor $j$ since processor $j$ will count triangles on $(v, u')$ after receiving $N_v$ for the first time.
\fi

The surrogate approach eliminates the exchange of redundant messages: while counting triangles incident on a node $v \in V_i$, processor $i$ may find multiple nodes $u\in N_v$ such that $u \in V_j$. Processor $i$ sends $N_v$ to processor $j$ when such a node $u$ is encountered for the first time. Since processor $j$ counts triangles incident on edge $(u,v)$ for all such nodes $u$, processor $i$ does not send $N_v$ again to processor $j$. 
%finds all such nodes $u$ by scanning $N_v$, and 

To implement the above strategy for eliminating redundant messages, processor $i$ needs to keep record of which processors it has already sent $N_v$ to, for a node $v\in V_i$. This is done using a single variable \textit{LastProc} which records the last processor a neighbor list $N_v$ is  sent to. The variable is initialized to a negative value. When processor $i$ encounters a node $u\in N_v$ such that $u\in V_j$, it checks the value of $\textit{LastProc}$.
If $\textit{LastProc} \neq j$, processor $i$ sends $N_v$ to processor $j$ and set $\textit{LastProc} = j$. Otherwise, the node $u$ is ignored, meaning it would be redundant to send $N_v$. Once all nodes $u\in N_v$ are checked, the variable $\textit{LastProc} $ is again reset to a negative value. It is easy to see that since $V_i$ is a set of consecutive nodes, and all neighbor lists $N_v$ are sorted, all nodes $u \in N_v$ such that $u\in V_j$ reside in $N_v$ in consecutive positions. Thus, using the variable $\textit{LastProc}$ redundant messages are detected correctly and eliminated without compromising any of execution and space efficiency. This capability of surrogate approach is crucial in the runtime performance of the algorithm, as shown experimentally in Section \ref{sec:space-perf}.

The pesudocode for counting triangles for an incoming message $\left\langle data, X \right\rangle$ is given in Fig. \ref{algo:space-tc-msg}.

%%%%%%%%% Algorithm executed by processor i to count triangles %%%%%%%%%%%%%%%%
\begin{figure}[!ht]
\begin{center}
\fbox{
\begin{minipage}[c] {0.85\linewidth}
\begin{algorithmic}[1]
	\STATE {Procedure \textsc{SurrogateCount}$(X,i):$}
	\STATE {$T \leftarrow 0$}\hspace{0.1in} //$T$ is the count of triangles
	\FOR {all $u\in X$ such that $u \in V_i$ }
			\STATE $S \leftarrow N_u \cap X$
			\STATE $T \leftarrow T+|S|$ 
	\ENDFOR
	\RETURN $T$
\end{algorithmic}
\end{minipage}
}
\end{center}
\fsc
\caption{A procedure executed by processor $i$ to count triangles for incoming message $\left\langle data, X \right\rangle $.}
\label{algo:space-tc-msg}
\afsc
\end{figure}

\subsection{Termination}

Once a processor $i$ completes the computation associated with all $v\in V_i$, it broadcasts a completion notifier $\left\langle completion, X \right\rangle $. However, it cannot terminate its execution since other processors might send it data messages $\left\langle data, X \right\rangle $ for counting triangles (as in Fig. \ref{algo:space-tc-msg}). When processor $i$ receives completion notifiers from all other processors, aggregation of counts from all processors is performed using MPI aggregation function, and the execution terminates.

%\subsection{Complete Pseudocode of the Algorithm}

The complete pseudocode of our space efficient parallel algorithm for counting triangles using surrogate approach is presented in Fig. \ref{algo:space-tc}.

%The pseudocode for counting triangles with efficient exchanging of neighbor lists via the surrogate approach is presented in figure \ref{algo:space-tc}.
 
%%%%%%%%% Algorithm executed by processor i to count triangles %%%%%%%%%%%%%%%%
\begin{figure}[!ht]
\begin{center}
\fbox{
\begin{minipage}[c] {0.95\linewidth}
\begin{algorithmic}[1]
%\STATE {//$T$ is the number of triangles}
\STATE {$T_i \leftarrow 0$}\hspace{0.1in} //$T_i$ is processor $i$'s count of triangles
\FOR {$v \in V_i$}
    %\STATE {Probe for incoming message}
    %\STATE {$TakePostProbeActions()$}
    \FOR {$u \in N_v$}
	    \IF { $u \in V_i$}
			\STATE $S \leftarrow N_v \cap N_u$
			\STATE $T_i \leftarrow T_i+|S|$ 
		\ELSIF {$u \in V_j$ }
			\STATE {\textbf{Send} $\left\langle data, N_v \right\rangle $ to proc. $j$ if not sent already}	
		\ENDIF
	\ENDFOR
	\STATE {}	
	\STATE {\textbf{Check} for incoming messages $\left\langle t, X \right\rangle $:}
	\IF { $t=data $}
			\STATE {$T_i \leftarrow T_i +$ \textsc{SurrogateCount}$(X,i)$}
		\ELSE 
			\STATE {\textbf{Increment} completion counter}	
	\ENDIF
\ENDFOR
\STATE{}
\STATE{\textbf{Broadcast} $\left\langle completion, X \right\rangle $}
\WHILE{completion counter $<$ P-1} 
	\STATE {\textbf{Check} for incoming messages $\left\langle t, X \right\rangle $:}
	\IF { $t=data $}
			\STATE {$T_i \leftarrow T_i +$ \textsc{SurrogateCount}$(X,i)$}
		\ELSE 
			\STATE {\textbf{Increment} completion counter}	
	\ENDIF
\ENDWHILE
\STATE{}
\STATE \textsc{MpiBarrier}
\STATE \textbf{Find Sum} $T \leftarrow \sum_i{T_i}$ using $\textsc{MpiReduce}$
\RETURN  $T$

\end{algorithmic}
\end{minipage}
}
\end{center}
\fsc
\caption{An algorithm for counting triangles using surrogate approach. 
Each processor $i$ executes Line 1-22. After that, they are synchronized and the aggregation is performed (Line 24-26).}
\label{algo:space-tc}
\afsc
\end{figure}

\subsection{Correctness of The Algorithm}
\label{sec:correct_spacealgo}

The correctness of our space efficient parallel algorithm is formally presented in the following theorem.

\begin{theorem} \label{thm:correct_spacealgo}
Given a graph $G=(V, E)$, our space efficient parallel algorithm correctly counts exact number of triangles in G.
\end{theorem}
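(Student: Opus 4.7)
My proof plan is to reduce correctness of the parallel algorithm to correctness of the sequential algorithm of Fig.~\ref{algo:serial}, which we may take as given. The sequential algorithm, for each edge $(v,u)$ with $v\prec u$ (i.e.\ $u\in N_v$), computes $|N_v\cap N_u|$ and accumulates. Since each triangle $\{a,b,c\}$ with $a\prec b\prec c$ contributes to exactly one such pair, namely $(v,u)=(a,b)$ (with the triangle witnessed by $c\in N_a\cap N_b$), the aggregate count equals the true number of triangles. So it suffices to show that the parallel algorithm, summed over all processors, performs exactly the same multiset of intersection computations $N_v\cap N_u$ for $u\in N_v$, each exactly once, and then aggregates correctly.

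I would proceed by fixing an arbitrary edge $(v,u)$ with $v\in V_i$ and $u\in N_v$, and splitting into two cases. If $u\in V_i$, the local branch of the main loop (Lines~4--6 of Fig.~\ref{algo:space-tc}) computes $N_v\cap N_u$ on processor $i$; no other processor sees this pair because only processor $i$ stores $N_v$ with $v\in V_i$. If instead $u\in V_j$ with $j\ne i$, processor $i$ enters the ``else if'' branch and considers sending $N_v$ to processor $j$; processor $j$, upon receiving $\langle data, N_v\rangle$, invokes \textsc{SurrogateCount} which iterates over precisely those $u'\in N_v$ with $u'\in V_j$ and computes $N_{u'}\cap N_v$. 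So the required intersection is computed on processor $j$ provided $N_v$ is actually delivered to $j$.

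The main obstacle, and the heart of the proof, is justifying the \textit{LastProc} optimization: I must show that $N_v$ is sent to $j$ \emph{at least once} whenever $N_v\cap V_j\ne\emptyset$, and \emph{not more than once}, so that no required intersection is dropped and no $(v,u)$ pair is double-counted across processors. Here I would use two structural facts stated in the paper: (i) each $V_k$ is a set of \emph{consecutive} node IDs, and (ii) $N_v$ is stored in sorted order. Together these imply that the set $\{u\in N_v : u\in V_j\}$ occupies a contiguous run in $N_v$. Therefore, when the main loop iterates through $N_v$, all neighbors landing in $V_j$ are visited consecutively; on the first such visit \textit{LastProc} differs from $j$, so $N_v$ is sent and \textit{LastProc} is set to $j$, while on every subsequent visit within the run \textit{LastProc} equals $j$ and the send is suppressed. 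Once the iterator leaves $V_j$, it cannot re-enter $V_j$ later in $N_v$, so no further send to $j$ occurs. Hence $N_v$ reaches $j$ exactly once, and \textsc{SurrogateCount} then enumerates \emph{every} $u\in N_v\cap V_j$, producing the full set of required intersections without duplication.

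Finally I would argue termination and aggregation. After finishing its main loop, each processor broadcasts a $\langle completion,\cdot\rangle$ message and then enters the while loop of Lines~20--22, continuing to service incoming $\langle data,\cdot\rangle$ messages until it has received completion notifiers from all other $P-1$ processors. Because MPI guarantees in-order, reliable point-to-point delivery, a completion notifier from processor $k$ is received by $i$ only after every $\langle data,\cdot\rangle$ message that $k$ previously sent to $i$; thus when $i$'s completion counter reaches $P-1$, every surrogate message that was ever emitted has been processed, so $T_i$ contains the full contribution of partition $G_i$ plus all surrogate work delegated to $i$. The \textsc{MpiBarrier} ensures global synchronization before \textsc{MpiReduce} sums the $T_i$, yielding exactly $\sum_{v\in V}\sum_{u\in N_v}|N_v\cap N_u|$, which by the sequential correctness is the exact triangle count. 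This gives the theorem.
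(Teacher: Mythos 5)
Your proposal is correct, but it is organized quite differently from the paper's proof. The paper argues triangle-by-triangle: it fixes a triangle $(x_1,x_2,x_3)$ with $x_1 \prec x_2 \prec x_3$, notes that $x_2,x_3 \in N_{x_1}$ and $x_3 \in N_{x_2}$, and splits into two cases according to whether $x_1$ and $x_2$ lie in the same partition or in different partitions, showing in each case that the triangle is counted exactly once (locally via $N_{x_1}\cap N_{x_2}$, or on processor $j$ via \textsc{SurrogateCount} after $N_{x_1}$ is shipped), with uniqueness following from $x_1 \notin N_{x_2}$ and $x_1,x_2 \notin N_{x_3}$. You instead argue edge-by-edge: you show that the union of all processors' work is exactly the same multiset of intersections $N_v \cap N_u$, $u \in N_v$, as in the sequential algorithm of Fig.~\ref{algo:serial}, each performed exactly once, and then appeal to sequential correctness. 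The case structures mirror each other (your pair $(v,u)$ plays the role of the paper's $(x_1,x_2)$), but your route proves strictly more: you make the exactly-once delivery of $N_v$ to every processor $j$ with $N_v \cap V_j \neq \emptyset$ a formal part of the correctness argument, justified by the contiguity of each $V_j$ within the sorted list $N_v$, and you verify that aggregation happens only after all data messages are serviced (via MPI non-overtaking and the ordering of data messages before completion notifiers). The paper's proof takes both points for granted; the \textit{LastProc} argument appears only in the algorithm-description text of Section~\ref{sec:space-tc}, and termination is not addressed in the proof at all. So your version is more self-contained, at the cost of length. One small imprecision: in your same-partition case, the reason no other processor recounts the pair $(v,u)$ is not that ``only processor $i$ stores $N_v$'' (copies of $N_v$ are in fact shipped to other processors), but that \textsc{SurrogateCount} on a processor $j$ intersects a received list $N_v$ only with $N_{u'}$ for $u' \in V_j$, and here $u \in V_i$, so $u \notin V_j$; this is precisely the restriction you already invoke in your cross-partition case, so the fix is cosmetic.
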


\begin{proof}
Consider a triangle $(x_1, x_2, x_3)$ in $G$, and without the loss of generality, assume that $x_1 \prec x_2 \prec x_3$. By the constructions of $N_x$ (Line 1-3 in Fig. \ref{algo:serial}), we have $x_2, x_3 \in {N_{x_1}}$ and $x_3 \in N_{x_2}$. 

Now, there might be two cases as shown below.

\begin{itemize}
\item[1.] \textit{Case $x_1, x_2 \in V_i$:}\\
Nodes $x_1$ and $x_2$ are in the same partition $i$. Processor $i$ executes the loop in Line 2-6 (Fig. \ref{algo:space-tc}) with $v = x_1$ and $u = x_2$, and node $x_3$ appears in $S = N_{x_1} \cap N_{x_2}$, and the triangle $(x_1, x_2, x_3)$ is counted once. But this triangle cannot be counted for any other values of $v$ and $u$ because $x_1 \notin N_{x_2}$ and $x_1,x_2 \notin N_{x_3}$.

\item[2.] \textit{Case $x_1 \in V_i, x_2 \in V_j, i \neq j$:}\\
Nodes $x_1$ and $x_2$ are in two different partitions, $i$ and $j$, respectively, without the loss of generality. Processor $i$ attempts to count the triangle executing the loop in Line 2-6 with $v = x_1$ and $u = x_2$. However, since $x_2 \notin V_i$, processor $i$ sends $N_{x_1}$ to processor $j$ (Line 8). Processor $j$ counts the triangle while executing the loop in Line 10-12 with $X = N_{x_1}$, and node $x_3$ appears in $S = N_{x_2} \cap N_{x_1}$(Line 2 in Fig. \ref{algo:space-tc-msg}). This triangle can never be counted again in any processor, since  $x_1 \notin N_{x_2}$ and $x_1,x_2 \notin N_{x_3}$.
\end{itemize}
Thus, in both cases, each triangle in $G$ is counted once and only once. This completes our proof.
\end{proof}

\subsection{Computing An Estimation for $f(v)$}
\label{sec:new_fv}
As discussed in Section \ref{sec:space-part}, computing balanced partitions requires an estimation of the cost $f(v)$ for counting triangles incident on node $v$. We compute a new function for estimating $f(v)$ which captures the computing cost of our algorithm more precisely, as follows. 

\iffalse
Recall that in our partitioning scheme estimates the cost $f(v)$ for counting triangles incident on node $v$ (Line 2-6 and 10-12 in Fig. \ref{algo:space-tc}).
A total $P$ partitions $V_i$ are computed in such a way that,
\begin{equation} 
\sum_{v \in V_i}{f(v)} \approx \frac{1}{P}\sum_{v \in V}{f(v)}.
\end{equation}
Further, we have defined $\mathcal{N}_v$ and $N_v$ in Section \ref{sec:prelim} and \ref{sec:background}, respectively, as follows.
\begin{eqnarray*}
\mathcal{N}_v = \{u: (u,v)\in E \} \\
N_v = \{u: (u,v)\in E, v \prec u \}
\end{eqnarray*}
\fi

Set of neighbors $\mathcal{N}_v$ and $N_v$ are defined in Section \ref{sec:prelim} and \ref{sec:background}, respectively, as $\mathcal{N}_v = \{u: (u,v)\in E \}$ and $N_v = \{u: (u,v)\in E, v \prec u \}$. It is easy to see, $u \in \mathcal{N}_v-N_v \Leftrightarrow v \in N_u$.
%, by the definition of degree based ordering $\prec$.

%For each node $u \in \mathcal{N}_v$ where $v \in V_i$, there might be two cases: 
To estimate the cost for counting triangles incident on node $v\in V_i$, we consider the cost for counting triangles incident on edges $(v,u)$ such that $u\in \mathcal{N}_v$. There might be two cases: 
\begin{itemize}
\item[1.] \textit{Case $u \in \mathcal{N}_v-N_v$:}\\
This case implies $v \in N_u$. There might be two sub-cases:
\begin{itemize}
\item If $u \in V_j$ for $j\neq i$, processor $j$ sends $N_u$ to processor $i$, and processor $i$ counts triangle by computing $N_u \cap N_v$ (Fig. \ref{algo:space-tc-msg}).
\item If $u \in V_i$, processor $i$ counts triangle by computing $N_u \cap N_v$ while executing the loop in Line 2-6 in Fig. \ref{algo:space-tc} for node $u$.
\end{itemize}
%Thus, a computation cost of $\sum_{ u\in \mathcal{N}_v-N_v}{(\hat{d_v} + \hat{d_u})}$ is incurred on processor $i$ for counting triangles incident on $(v,u)$ for $v\in V_i$ and $u \in \mathcal{N}_v-N_v$.
Thus for both sub-cases processor $i$ computes triangles incident on $(v,u)$. All such nodes $u$ impose a computation cost of $\sum_{ u\in \mathcal{N}_v-N_v}{(\hat{d_v} + \hat{d_u})}$ on processor $i$ for node $v$.
\item[2.] \textit{Case $u \in N_v$:}\\
This case implies $v \in \mathcal{N}_u-N_u$ which is same as case 1 with $u$ and $v$ interchanged. By a similar argument of case 1, the imposed computation cost for such $(v,u)$ is attributed to node $u$.

%This case implies $v \in \mathcal{N}_u-N_u$ which is same as case 1 with $u$ and $v$ interchanged. By a similar argument of case 1, if $u\in V_j$, processor $j$ computes triangles incident on $(u,v)$. That is, for $u \in N_v$, the cost of counting triangles incident on $(v,u)$ is imposed on processor $j$ for node $u\in V_j$.

%for node $u\in V_j$, all nodes $v \in \mathcal{N}_u-N_u$ impose a computation cost of $\sum_{ v\in \mathcal{N}_u-N_u}{(\hat{d_u} + \hat{d_v})}$ on processor $j$.

%for $u\in V_j$, processor $j$ computes triangles incident on $(u,v)$ for all nodes $v\in \mathcal{N}_u-N_u$ and this imposes a computation cost of $ \sum_{ v\in \mathcal{N}_u-N_u}{(\hat{d_u} + \hat{d_v})}$ on processor $j$.
\end{itemize}

Thus the estimated cost attributed to node $v$ for counting triangles on all edges $(v,u)$, such that $u\in \mathcal{N}_v$, is $\sum_{ u\in \mathcal{N}_v-N_v}{(\hat{d_v} + \hat{d_u})}$.
This gives us the intended function $f(v) = \sum_{ u\in \mathcal{N}_v-N_v}{(\hat{d_v} + \hat{d_u})}$ which we use in our partitioning step. We present an experimental evaluation comparing the best estimation function presented in \cite{Patric-triangle} with ours in Section \ref{sec:space-perf}.
%Thus the estimated cost of counting triangles on all edges $(v,u)$, such that $u\in \mathcal{N}_v$, which is attributed to node $v$ is $\sum_{ u\in \mathcal{N}_v-N_v}{(\hat{d_v} + \hat{d_u})}$.

%Taking both cases into consideration, the estimated cost for counting triangles incident on a node $v\in V_i$ comes from case 1, which is $\sum_{ u\in \mathcal{N}_v-N_v}{(\hat{d_v} + \hat{d_u})}$.  

%We use the function $f(v) = \sum_{ u\in \mathcal{N}_v-N_v}{(\hat{d_v} + \hat{d_u})}$ in computing balanced partitions in our partitioning step. We present an experimental evaluation comparing the best estimation function presented in \cite{Patric-triangle} with ours in Section \ref{sec:space-perf}.

\subsection{Runtime and Space Complexity}
\label{sec:runtime_spacealgo}

The runtime and space complexity of our algorithm are presented below.

%\noindent \textbf{Runtime Complexity.}
\paragraph*{Runtime Complexity}
The runtime complexity of our algorithm is the sum of costs for computing partitions, counting triangles, exchanging $N_v$s, and aggregating counts from all processors. Computing balanced partition takes $O(m/P + P \log P)$ time using the scheme presented in \cite{Patric-triangle}.  For the algorithm given in Fig. \ref{algo:space-tc}, the worst case cost for counting triangles is $O(\sum_{v\in V_i}\sum_{u\in \mathcal{N}_v-N_v}{(\hat{d}_u + \hat{d}_v)})$. Further, the communication cost incurred on a processor is $O(m/P)$ in the worst case. The aggregation of counts require $O(\log P)$ time using MPI aggregation function. Thus, the time complexity of our parallel algorithm is,
\begin{eqnarray*}
 O(m/P + P\log P + \max_i \sum_{v\in V_i}\sum_{u\in \mathcal{N}_v-N_v}{(\hat{d}_u + \hat{d}_v)}) 
\end{eqnarray*} 

\iffalse
To have a crude approximation of the runtime complexity, we replace $d$ and $\hat{d}$ with average degree $\bar{d}$ and $\bar{d}/2$, respectively, and have a runtime complexity of  $\approx  O(n\bar{d}/P + P\log P + n\bar{d}^2/P)$, which is, $ O( n\bar{d}^2/P + P\log P)$. Thus, the parallel algorithm divides the sequential cost \cite{Patric-triangle} among $P$ processors ($ O( n\bar{d}^2/P)$) while introducing some parallel overhead ($O(n\bar{d}/P + P\log P)$).
\fi

%\noindent \textbf{Space Complexity.}
\paragraph*{Space Complexity}
The space complexity of our algorithm depends on the size of the partitions and messages exchanged among processors. The size of the partition is $O(\max_i \{|V_i'| + |E_i'|\})$. Now our algorithm stores only a single incoming or outgoing message at a time, and each data message contains $N_v$. Thus, the space requirement for storing messages is $O(\max_{v\in V}|N_v|) = O(d_{max})$. Taking those two into account, the overall space complexity is $O(\max_i \{|V_i'| + E_i'\} + d_{max})$.

\begin{figure*}[!tbh]
\hfill
\begin{minipage}[t]{.32\textwidth}
\begin{center}
\centerline{\includegraphics[width=1.0\textwidth]{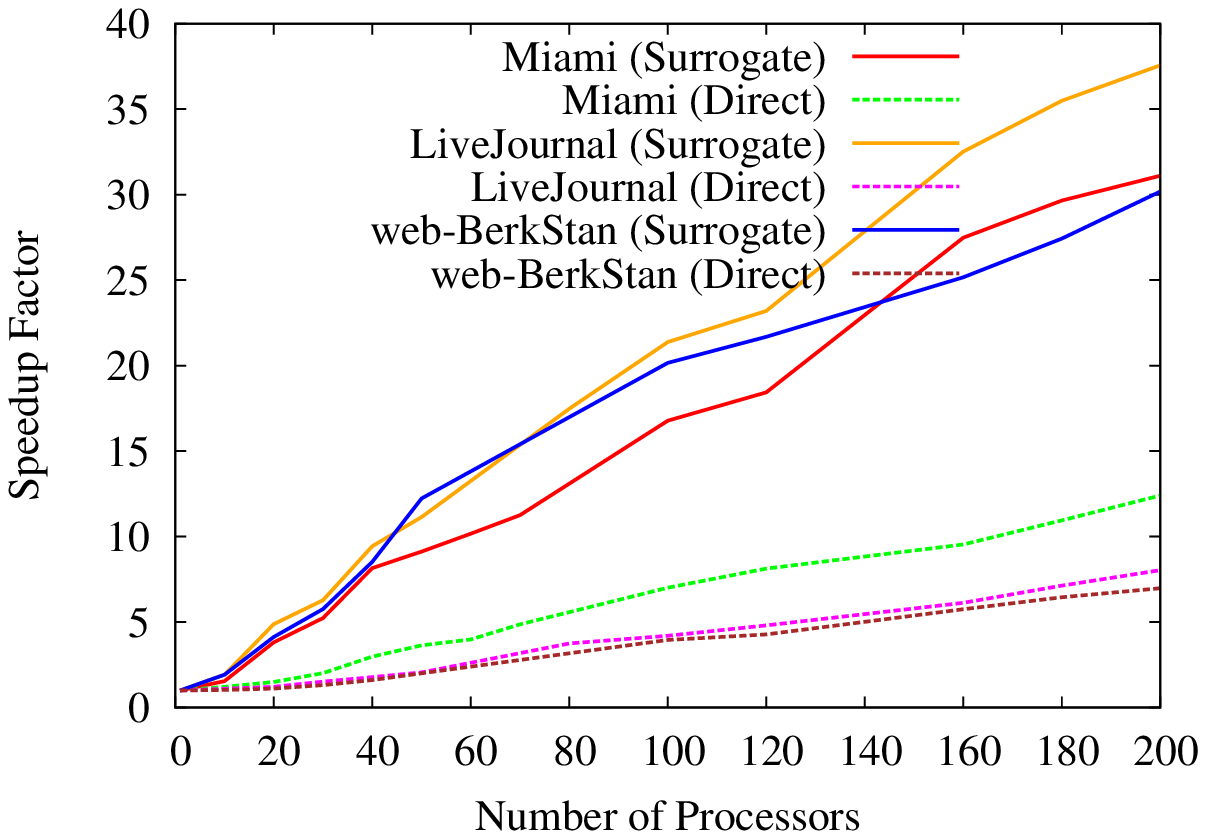}}
\fsc
\caption{Speedup factors of our algorithm with both direct and surrogate approaches.}
\label{fig:strong-space}
\afsc
\end{center}
\end{minipage}
\hfill
\begin{minipage}[t]{.32\textwidth}
\begin{center}
\centerline{\includegraphics[width=1.0\textwidth]{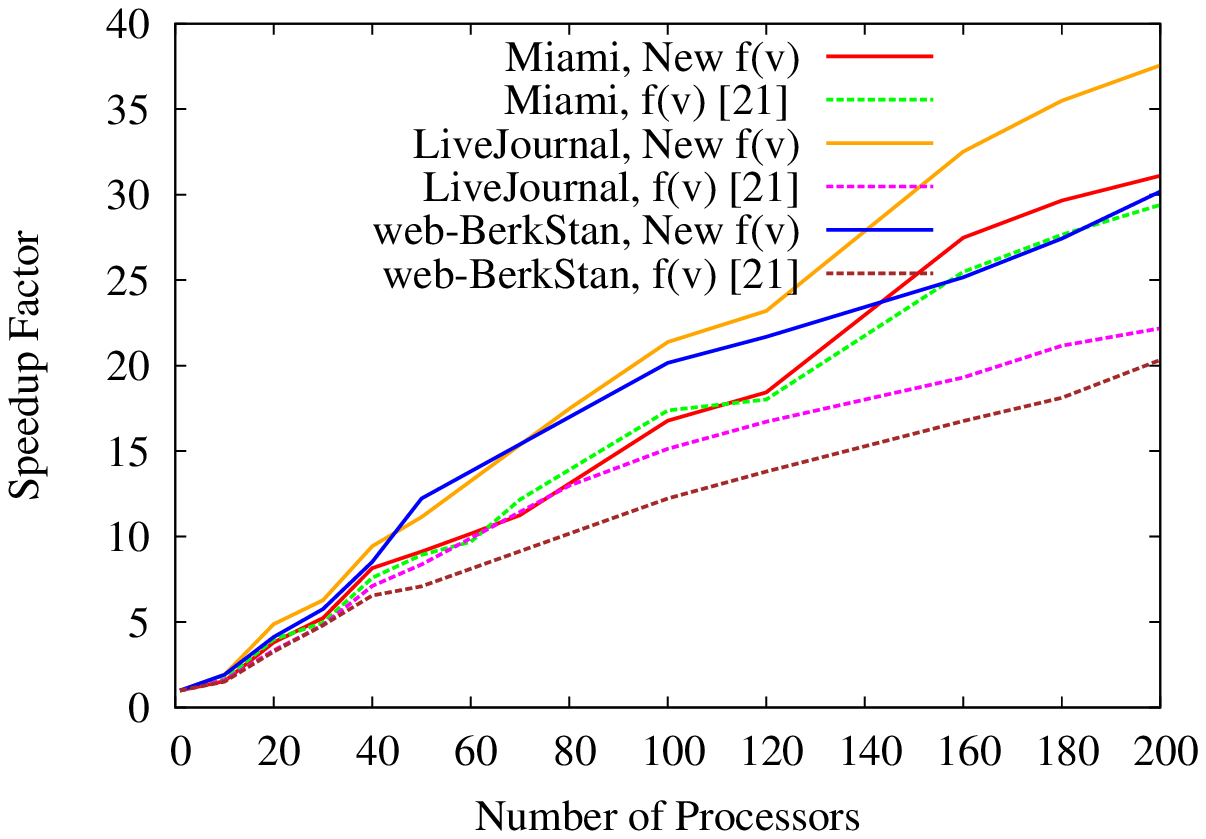}}
\fsc
\caption{Speedup factors of our algorithm with new estimation function and the best function of \cite{Patric-triangle}.}
\label{fig:new_fv}
\afsc
\end{center}
\end{minipage}
\hfill
\begin{minipage}[t]{.32\textwidth}
\begin{center}
\centerline{\includegraphics[width=1.0\textwidth]{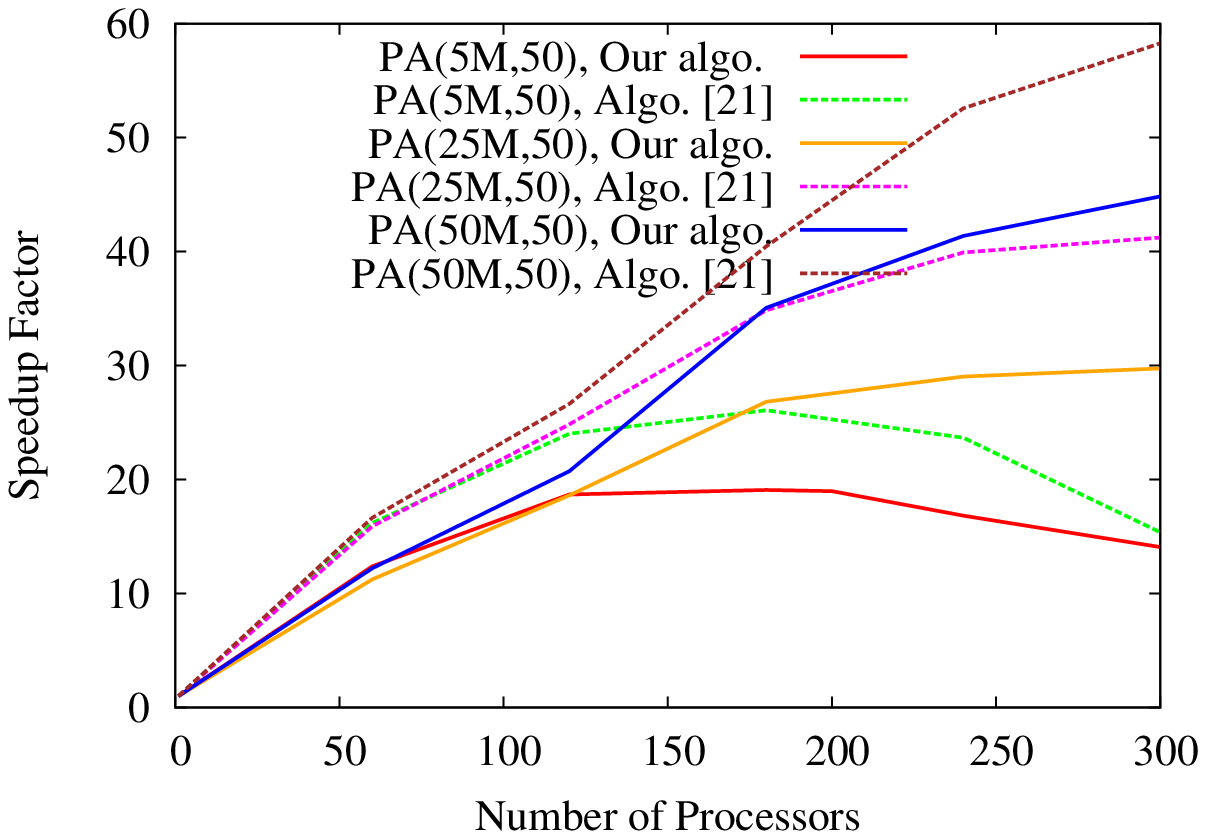}}
\fsc
\caption{Improved scalability of our algorithm with increasing network size.}
\label{fig:space_pscale}
\afsc
\end{center}
\end{minipage}
\hfill
\vspace{-5pt}
\end{figure*}

\subsection{Performance}
\label{sec:space-perf}

In this section, we present the experimental evaluation of the performance our space-efficient parallel algorithm.

\paragraph*{Strong Scaling}

Strong scaling of a parallel algorithm shows how much speedup a parallel algorithm gains as the number of processors increases. Fig. \ref{fig:strong-space} shows strong scaling of our algorithm on Miami, LiveJournal, and web-BerkStan networks with both direct and surrogate approaches. Speedup factors with the surrogate approach are significantly higher than that of the direct approach. The high communication overhead of direct approach due to redundant messages leads to poor speedup whereas the surrogate approach eliminates message redundancy without introducing any computational overhead leading to better performance.  
%Speedup factors with both approaches increase linearly with additional processors.

\paragraph*{Effect of Estimation for f(v)} 

%In section \ref{sec:new_fv} we computed a new function for estimating computational cost incurred at a node, and used that estimation in our partitioning (load balancing) scheme. 

We show the performance of our algorithm with new estimation function (computed in Section \ref{sec:new_fv}), $f(v) = \sum_{ u\in \mathcal{N}_v-N_v}{(\hat{d_v} + \hat{d_u})}$, and the best estimation function $f(v) = \sum_{ u\in N_v}{(\hat{d_v} + \hat{d_u})}$ of \cite{Patric-triangle}. As Fig. \ref{fig:new_fv} shows, our algorithm with new estimation function provides better speedup than that with \cite{Patric-triangle}.  Miami network has a comparable performance with both functions since it has a relatively even degree distribution and both functions provide almost the same estimation. However, for networks with skewness in degrees (LiveJournal and web-BerkStan), our new function estimates the computational cost more precisely and provides significantly better speedup.   

\paragraph*{Scaling with Processors and Network Size} 

We show how our algorithm scales with increasing network size and number of processors, and compare the results with the algorithm in \cite{Patric-triangle}. Our algorithm scales to a higher number of processors when networks grow larger, as shown in Fig. \ref{fig:proc_scale}. This is, in fact, a highly desirable behavior of our parallel algorithm since we need a large number of processors when the network size is large and computation time is high. Now as compared to \cite{Patric-triangle}, the speedup and scaling of our algorithm are a little smaller since our algorithm has a higher communication overhead. However, this difference in scaling is very small and both algorithms perform comparably.

%Since the space efficiency of our algorithm comes at a cost of communication overhead for exchanging $N_v$s, the speedup and scaling are a little smaller than that of \cite{Patric-triangle}. However, this difference is very small and both algorithms perform comparably. 

\paragraph*{Memory Scaling} 

We compare the space requirement of our algorithm and \cite{Patric-triangle} with networks with increasing average degrees. For this experiment, we use $PA(10M, d)$ networks with average degree $d$ varying from $10$ to $100$. As shown in Fig. \ref{fig:memory_comp}, our algorithm shows a very linear (and slow) increase of space requirement whereas for \cite{Patric-triangle} the space requirement increases very rapidly. Our algorithm divides the network into non-overlapping partitions and hence has a much smaller space requirement as discussed in Section \ref{sec:background}. For the same reason, space requirement of our algorithm for storing a partition reduces rapidly with additional processors, as shown in Fig. \ref{fig:memory_scale}. 

\iffalse
We also experiment on the space requirement of our algorithm with increasing number of processors. As Fig. \ref{fig:memory_scale} shows, the space requirement per partition reduces almost linearly with additional processors. Thus, the algorithm shows a very good memory scalability with processors.
\fi

\begin{figure*}[!tbh]
\hfill
\begin{minipage}[t]{.32\textwidth}
\begin{center}
\centerline{\includegraphics[width=1.0\textwidth]{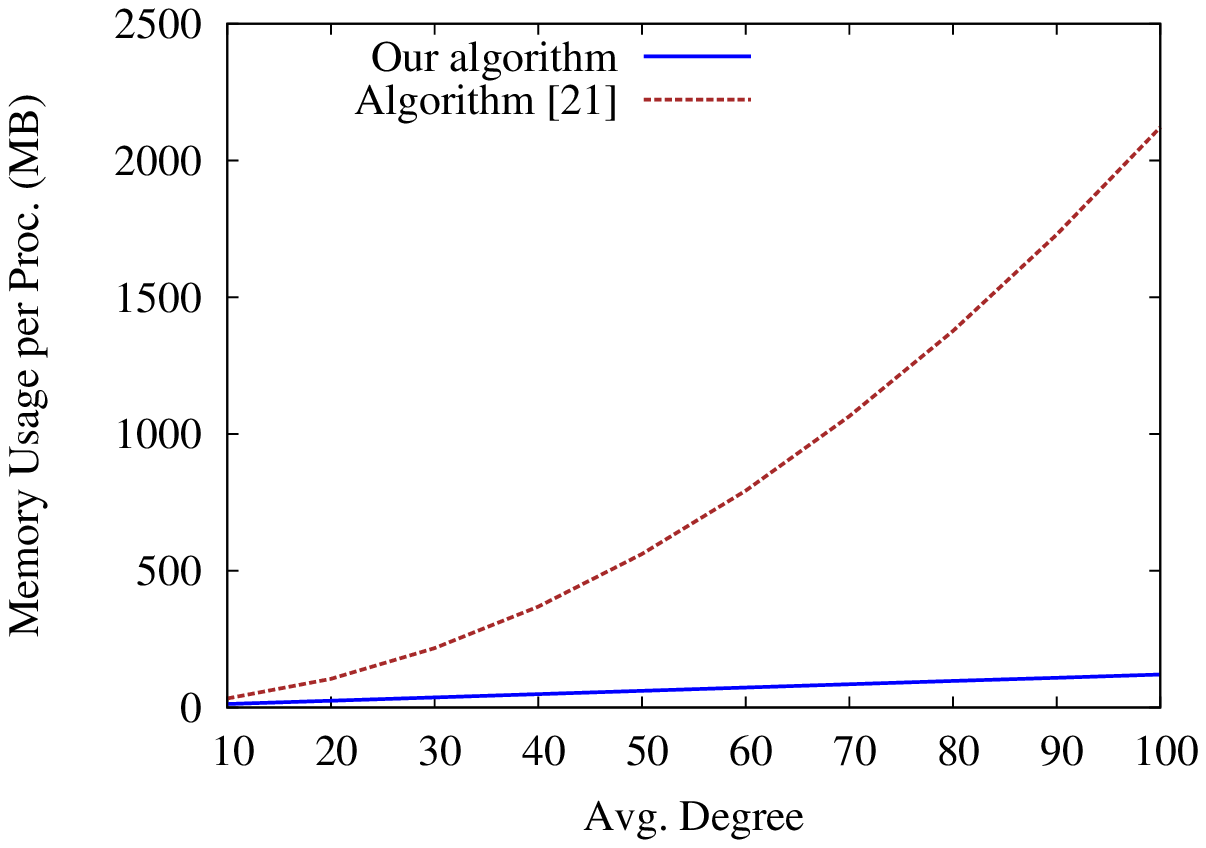}}
\fsc
\caption{Increase of space requirement of our algorithm and algorithm \cite{Patric-triangle} with increasing average degree of networks.}
\label{fig:memory_comp}
\afsc
\end{center}
\end{minipage}
\hfill
\begin{minipage}[t]{.32\textwidth}
\begin{center}
\centerline{\includegraphics[width=1.0\textwidth]{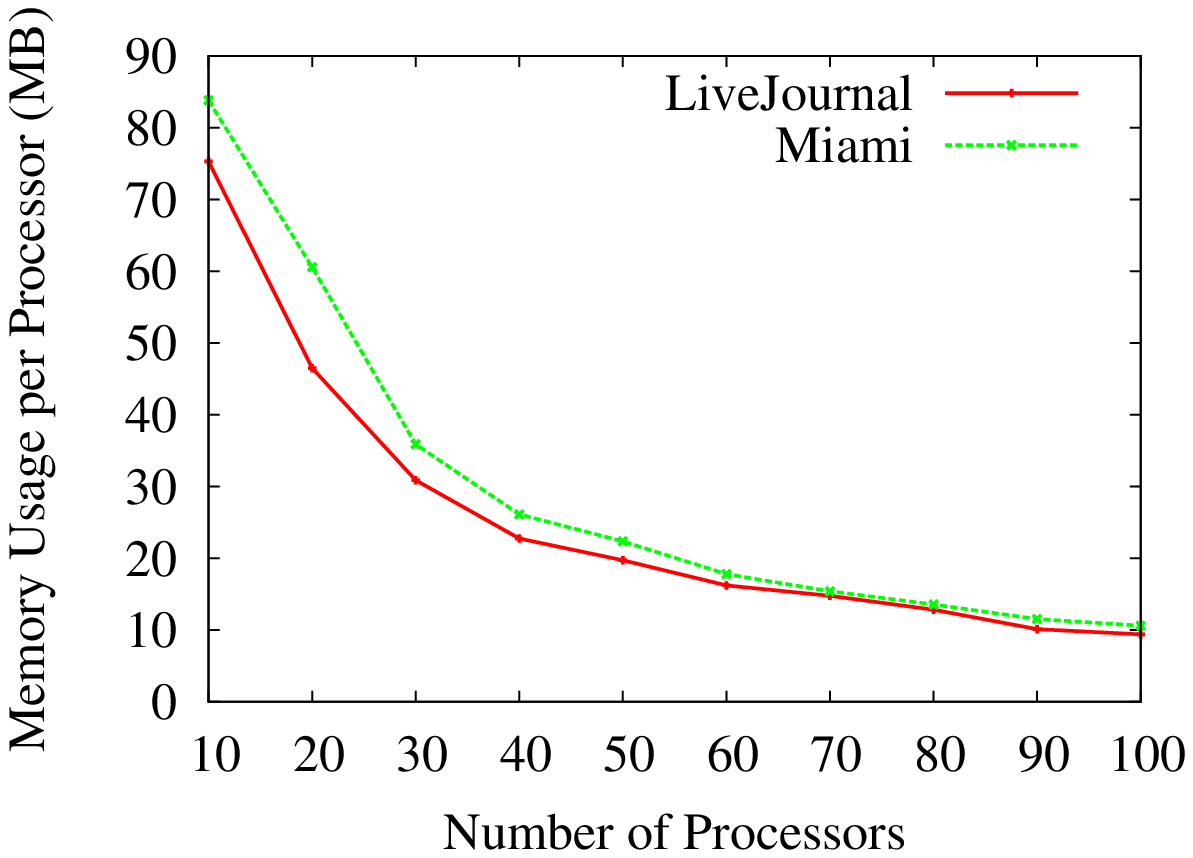}}
\fsc
\caption{Memory scalability of our algorithm with increasing number of processors for Miami and LiveJournal networks.}
\label{fig:memory_scale}
\afsc
\end{center}
\end{minipage}
\hfill
\begin{minipage}[t]{.32\textwidth}
\begin{center}
\centerline{\includegraphics[width=1.0\textwidth]{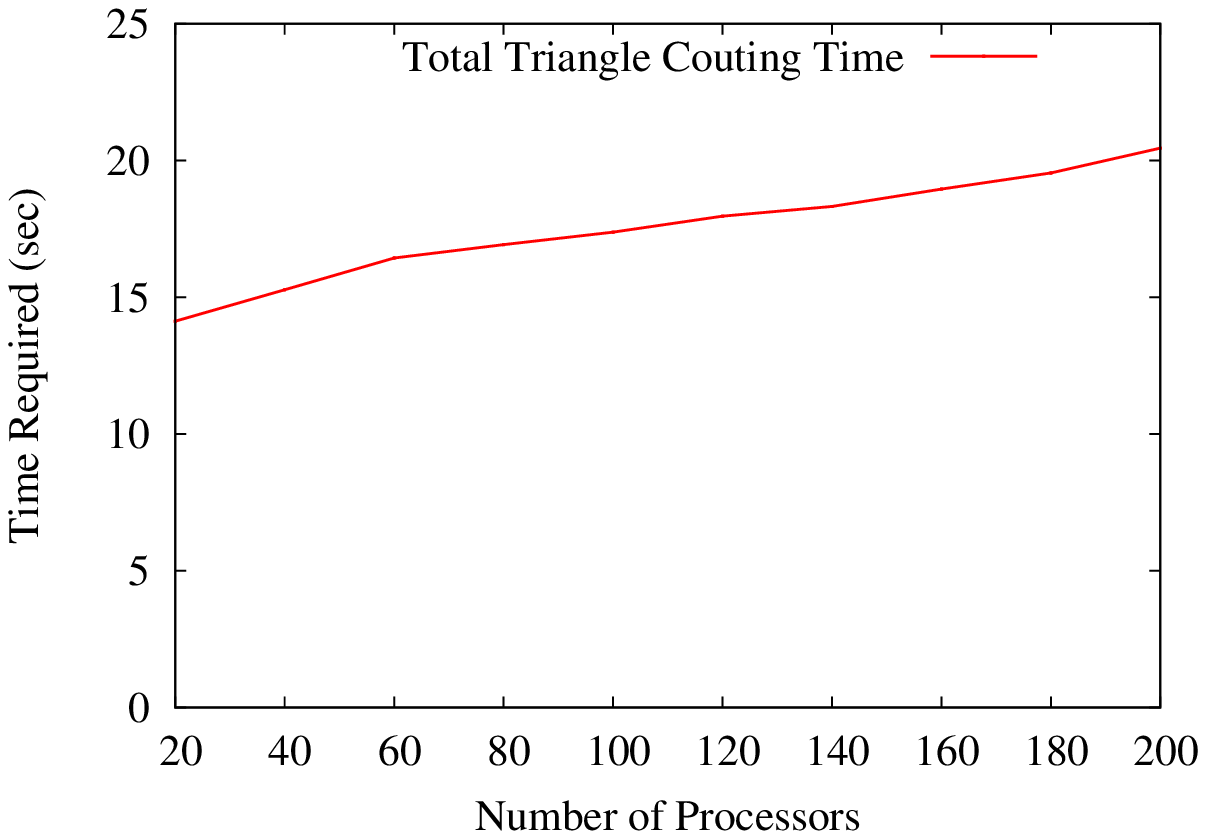}}
\fsc
\caption{Weak scaling of our algorithm. The experiment is performed on PA($P/10*1M, 50$) networks where $P$ is the number of processors used.}
\label{fig:weak-space}
\afsc
\end{center}
\end{minipage}
\hfill
\vspace{-5pt}
\end{figure*}

\paragraph*{Comparison of Runtime with Previous Algorithms}
\iffalse
Note that the algorithm in \cite{Patric-triangle} employs an overlapping partition such that no processor needs to communicate for counting triangles in its own partition. Thus, there is no communication overhead involved in counting triangles (other than the aggregation in the last step). In contrast, our algorithm achieves an impressive space efficiency by employing a non-overlapping partition. This requires an explicit inter-processor communication for exchanging neighbor lists $N_u$. As a result, our algorithm shows higher runtime than \cite{Patric-triangle}.
\fi

We present a comparison of runtime of our algorithm with the algorithm in \cite{Patric-triangle} in Table \ref{table:reusage}. Since our algorithm exchanges messages for counting triangles, it has a higher runtime than \cite{Patric-triangle}. Runtime with the direct approach is relatively high due to message redundancy. However, our algorithm with surrogate approach improves the runtime quite significantly, and the performance is quite comparable to \cite{Patric-triangle}.  

\begin{table}[!hbt]
\centering
\caption{Runtime performance of our algorithm and the algorithm in \cite{Patric-triangle}. We used 200 processors for this experiment.}
\label{table:reusage}
\begin{tabular}{|l|l|l|l|l|} \hline
\multirow{2}{*}{ \textbf{Networks}} &  \multicolumn{3}{|c|}{\textbf{Runtime }} & \multirow{2}{*}{\textbf{Triangles}} \\ 
\cline{2-4}
 & \cite{Patric-triangle} & Direct & Surrogate& \\ \hline
	web-BerkStan  &    $0.10$s  & $3.8$s  & $0.14$s & $65$M    \\ \hline %200 proc\\ \hline
	Miami         &     $0.6$s    & $4.79$s & $0.79$s & $332$M    \\ \hline %200 proc\\ \hline	
	LiveJournal   &   $0.8$s     & $5.12$s  & $1.24$s &$286$M   \\ \hline %200 proc\\ \hline
	Twitter   &   $9.4$m    &  $35.49$m & $12.33$m  &$34.8$B  \\ \hline %8 hrs,state-of-the-art\\ \hline	
	PA(1B, 20)         &     $15.5$m    & $78.96$m & $20.77$m &$0.403$M  \\ \hline %200 proc\\ \hline
    \end{tabular}
\end{table}

\paragraph*{Weak Scaling} 

Weak scaling of a parallel algorithm measures its ability to maintain constant computation time when the problem size grows proportionally with the number of processors. The weak scaling of our algorithm is shown in Fig. \ref{fig:weak-space}. Although the problem size per processor remains same, the addition of processors causes the overhead for exchanging messages to increase. Thus with increasing number of processors, runtime of our algorithm increases slowly. Since the change in runtime is not very drastic, rather slow, the weak scaling of the algorithm is good.

\iffalse
\begin{figure}[!ht]
\centering
\includegraphics[scale=0.45]{}
\fsc
\caption{Speedup factors on Miami, LiveJournal and web-BerkStan network.}
\label{fig:strong-space}
\afsc
\end{figure}

\begin{figure}[!ht]
\centering
\includegraphics[scale=0.45]{new-all-speed-space.eps}
\fsc
\caption{Comparison of the performance of our algorithm with newly computed estimation function $f'$ and the previous best estimation function $f$ in \cite{Patric-triangle}.}
\label{fig:new_fv}
\afsc
\end{figure}
\fi
\section{A fast parallel algorithm with dynamic load balancing}
\label{sec:dynamic}

In this section, we present our parallel algorithm for counting triangles with an efficient dynamic load balancing scheme. First, we provide an overview of the algorithm, and then a detailed description follows.

\subsection{Overview of the Algorithm}
\label{sec:dynamic-ov}

We assume that each computing node has enough memory to store the whole network. The computation of counting triangles in the network is distributed among processors. We refer the computation assigned to and performed by a processor as a task. For the convenience of future discussion, we present the following definitions related to computing tasks.

\begin{definition} \label{dfn:task}
\textbf{Task:} Given a graph $G=(V, E)$, a task denoted by $\left\langle v, t \right\rangle $, refers to counting triangles incident on nodes $v \in \{v, v+1, \dots, v+t-1 \} \subseteq V$. The task referring to counting triangles in the whole network is $\left\langle 0, n \right\rangle $.  
\end{definition}

\begin{definition} \label{dfn:atomic_task}
\textbf{An atomic task:} A task $\left\langle v, 1 \right\rangle $ referring to counting triangles incident on a single node $v$ is an atomic task. An atomic task can not be further divided. 
\end{definition}

%The number of nodes $t$ in the set $\{v_x, v_x+1, \dots, v_x+t-1 \}$ is called the size of the task.
\begin{definition} \label{dfn:task_size}
\textbf{Task size:} Let, $f: V \rightarrow \mathcal{R}$ be a cost function such that $f(v)$ denotes some measure of the cost for counting triangles on node $v$. We define the size $S(v,t)$ of a task $\left\langle v, t \right\rangle$  as follows.
\begin{eqnarray*}
S(v,t) = \sum_{i=0}^{t-1}{f(v+i)}.
\end{eqnarray*}
\end{definition}
We consider the cost functions $f(v) =1$ and $f(v) =d_v$ since those are known for all $v\in V$ and have no computational overhead. Using a computationally expensive function for  representing the cost for counting triangles might lead to poor performance of the algorithm.

%Some other functions, e.g., $f(v)=\hat{d}$, might represent the cost for counting triangles on node $v$ more precisely but their computation cost might be detrimental to the performance of the algorithm.

%We consider the following two cost functions, $f(v) =1$ and $f(v) =d_v$. Since $d_v$ is known for all $v\in V$, these functions themselves do not introduce any computational overhead.

Now in a static load balancing scheme, each processor works on a pre-computed partition. Since the partitioning is based on estimated computing cost which might not equal to the actual computing cost, some processors will remain idle after finishing computation ahead of others. Our algorithm employs a dynamic load balancing scheme to reduce idle time of processors leading to improved performance. The algorithm divides the total computation into several tasks and assign them dynamically. How and when to assign a task require communication among processors. The scheme for Communication and decision about task granularity are crucial to the performance of our algorithm.

\iffalse
While in a static load balancing scheme, the network is divided into $P$ partitions, one for each processor, our algorithm divides the total computation into quite a number of tasks and assign them dynamically. How and when to assign a task require communication among processors. The scheme for Communication and decision about task granularity are crucial to the performance of our algorithm.
\fi
\iffalse
Our algorithm with dynamic load balancing works as follows. Each processor is assigned an initial task. When a processor finishes its assigned task and becomes idle, further task is assigned to it dynamically. The decision of when  to assign task requires exchanging the information about the current workload of processors. The scheme for making this decision is crucial to the performance of the algorithm. 
\fi

In the following subsection, we describe an efficient algorithm for dynamic load balancing. %The algorithm for computing  a task $\left\langle v, t \right\rangle $ follows thereafter.

\subsection{An Efficient Dynamic Load Balancing Scheme}
\label{sec:dynamiclb}

We design a dynamic load balancing scheme with a dedicated processor for coordinating balancing decisions. We distinguish this processor as the \textit{coordinator} and the rest as \textit{workers}. The \textit{coordinator} assigns tasks, receives notifications and re-assigns tasks to idle workers, and  \textit{workers} are responsible for actually performing \textit{tasks}. At the beginning, each worker is assigned an initial task. Once any worker $i$ completes its current task, it sends a request to \textit{coordinator} for an additional task. From the available un-assigned tasks, \textit{coordinator} assigns a new task to worker $i$. 

The coordinate may divide the computation into tasks of equal size and assign them dynamically. However, the size of tasks is a crucial determinant of the performance of the algorithm. Assume time required by some worker to compute the last completed task is $q$. The amount of time a worker remains idle, denoted by a continuous random variable $X$, can be assumed to be uniformly distributed over the interval $[0,q]$, i.e., $X \sim U(0,q)$. Since $E[X]=q/2$, a worker remains idle for $q/2$ amount of time on average. If the size $S(v,t)$ of tasks $\left\langle v, t \right\rangle $ is large, time $q$ required to complete the last task becomes large, and consequently, idle time $q/2$ also grows large. In contrast, if $S(v,t)$ decreases, the idle time is expected to decrease. However, if $S(v,t)$ is very small, total number of tasks becomes large, which increases communication overhead for task requests and re-assignments. 

Therefore, instead of keeping the size of tasks $S(v,t)$ constant throughout the execution, our algorithm adjusts $S(v,t)$ dynamically, initially assigning large tasks and then gradually decreasing them. 
In particular, initially half of the total computation $\left\langle 0, n \right\rangle $ is assigned among the workers in tasks of almost equal sizes. That is, a total of $\left\langle 0, t' \right\rangle $ task, such that $S(0,t')=\frac{1}{2}S(0,n)$, is assigned initially, and the remaining computations $\left\langle t', n-t' \right\rangle$ are assigned dynamically with the granularity of tasks decreasing gradually. Next, we describe the steps of our dynamic load balancing scheme in detail.  

%From the total computation $\left\langle 0, n \right\rangle $, initially half $\left\langle 0, t' \right\rangle $ (such that $S(0,t')=\frac{1}{2}S(0,n)$) is assigned among the workers in tasks of almost equal sizes. The remaining half $\left\langle t', n-t' \right\rangle$ of the computation is assigned dynamically. Next, we describe the steps of our dynamic load balancing scheme in detail. 

\iffalse
Assigning new tasks thus involves exchanging notification and data only between the idle worker and coordinator leaving all other workers uninterrupted. Further, the size of the messages exchanged for assigning a task is very small (specifies a task $\left\langle v_x, t_x \right\rangle $ by sending only $v_x$ and $t_x$). Thus, the overhead for balancing load is very low in this scheme. Next, we describe the steps of our dynamic load balancing scheme in detail. 
\fi

%\subsubsection*{Initial Assignment} 

%The \textit{coordinator} assigns each worker an initial task. 
%The initial assignments correspond to computing triangles on a fraction of the whole network. 
%Since the algorithm balances computational load dynamically, an expensive scheme for balancing initial load may not be necessary; it can even be detrimental to the performance of the algorithm. 

\textbf{Initial Assignment.} 
The set of $(P-1)$ initial tasks corresponds to counting triangles on nodes $v \in \{0, 1, \dots, t'-1\}$ such that $S(0,t') \approx S(t',n-t')$. Thus we need to find node $t'$ which divides the set of nodes $V$ into two disjoint subsets in such a way that $\sum_{v=0}^{t'-1}{f(v)}\approx \sum_{v=t'}^{n-1}{f(v)}$, given $f(v)$ for each $v\in V$. Now if we compute sequentially, it takes $O(n)$ time to perform the above computations. However, we observe that a parallel algorithm for computing balanced partitions of $V$ proposed in \cite{Patric-triangle} can be used to perform the above computation which takes $O(n/P+ \log P)$ time. Once $t'$ is determined, the task $\left\langle 0, t' \right\rangle $ is divided into $(P-1)$ tasks $\left\langle v, t \right\rangle$, one for each worker, in almost equal sizes. 
\begin{equation} \label{eqn:init_task}
S(v,t) = \frac{1}{P-1}\sum_{v\in 0}^{t'-1}{f(v)}.
\end{equation}
That is, set of nodes $\{0, 1, \dots, t'-1\}$ is divided into $(P-1)$ subsets such that for each subset $\{v, v+1, \dots, t-1\}$,  $\sum_{i=0}^{t-1}{f(v+i)} \approx \frac{1}{P-1}\sum_{v\in 0}^{t'-1}{f(v)}$. This computation can also be done using the parallel algorithm \cite{Patric-triangle} mentioned above.

Note that all $P$ processors work in parallel to determine initial tasks. Since the initial assignment is deterministic, workers pick their respective tasks $\left\langle v, t \right\rangle$ without involving the coordinator. 

\textbf{Dynamic Re-assignment.} 
Once any worker completes its current task and becomes idle, the \textit{coordinator} assigns it a new task dynamically. This re-assignment is done in the following steps. 
\begin{itemize}
\item The coordinator divides the un-assigned computations $\left\langle t', n-t' \right\rangle$ into several tasks and stores them in a queue $W$. How the coordinator decides on the size $S(v,t)$ of each task  $\left\langle v, t \right\rangle$ will be described shortly. 
%Each task $\left\langle v_x, t \right\rangle \in W$ denotes an un-assigned task, i.e., a task yet to be assigned to some worker.
\item When any worker $i$ finishes its current task and becomes idle, it sends a task request $\left\langle i \right\rangle $ to the coordinator.
\item If $W \neq \emptyset$, the coordinator picks a task  $\left\langle v, t \right\rangle \in W$, and assigns it to worker $i$.
\end{itemize}

Our algorithm decreases the size $S(v,t)$ of each dynamically assigned tasks gradually for the reasons discussed at the beginning of this subsection. Let, $V'$ be the set of nodes remaining to be assigned as tasks. Since at every new assignment $V'$ decreases our algorithm uses $V'$ to dynamically adjust task sizes. This is done using the following equation. 
%We modify the equation \ref{eqn:task_size} as follows.
\begin{equation} \label{eqn:task_adj}
S(v,t) = \frac{1}{P-1}\sum_{v \in V'}{f(v)}.
\end{equation}
Note that the size $S(v,t)$ of a dynamically assigned task $\left\langle v, t \right\rangle $ decreases at every new assignment. By the definition of atomic task (in definition \ref{dfn:atomic_task}) we have a finite number of tasks. When the coordinator has no more unassigned tasks, i.e., $W = \emptyset$, it sends a special termination message $\left\langle terminate \right\rangle$ to the requesting worker. Once the coordinator completes sending termination messages to all workers, it aggregates counts of triangles from all workers, and the algorithm terminates.

%The dynamic adjustment of task size $S(v,t)$ improves the performance of our algorithm significantly (shown experimentally in Section \ref{sec:dynamic-perf}).

%Note that by the definition of atomic task (in definition \ref{dfn:atomic_task}) we have a finite number of tasks. Further, equation \ref{eqn:init_task} ensures that number of total tasks does not grow too large since half of the total computational tasks is assigned in the initial distribution.

\iffalse
%\subsubsection*{Termination}
\textbf{Termination.}
When the coordinator has no more unassigned tasks, i.e., $W = \emptyset$, it sends a special message $\left\langle terminate \right\rangle$ indicating \textit{termination} to the requesting worker. Upon receiving the message, the worker remains idle since no more task is available for computing. Once the coordinator completes sending \textit{termination} messages to all workers, it aggregates counts of triangles from all workers, and the algorithm terminates. 
\fi

\subsection{Counting Triangles}
\label{sec:dynamic-tc}

Once a processor $i$ has an assigned task $\left\langle v, t \right\rangle $, it uses the algorithm presented in Fig. \ref{algo:tcount} to count the triangles incident on nodes $v \in \{v, v+1, \dots, v+t-1 \}$.

%%%%%% counting %%%%%%%%%%%%%%%%%%
\begin{figure}[!tbh]
\begin{center}
\fbox{
\begin{minipage}[c] {0.7\linewidth}
\begin{algorithmic}[1]
%\FOR {$v \in \{v_x, v_x+1, \dots, v_x+t-1 \}$}
%	\STATE sort $N_v$ in ascending order
%\ENDFOR
\STATE {Procedure \textsc{CountTriangles}$(\left\langle v, t \right\rangle):$}
	\STATE {$T \leftarrow 0$}\hspace{0.1in} //$T$ is the count of triangles
\FOR {$v \in \{v, v+1, \dots, v+t-1 \}$}
	\FOR {$u \in N_v$}
		\STATE $S \leftarrow N_v \cap N_u$
		\STATE $T \leftarrow T+|S|$
	\ENDFOR
\ENDFOR
\RETURN  $T$	
\end{algorithmic}
\end{minipage}
}
\end{center}
\fsc
\caption{A procedure executed by processor $i$ to count triangles corresponding to the task $\left\langle v, t \right\rangle $.}
\label{algo:tcount}
\afsc
\end{figure}

%\STATE {\textbf{Find} remaining tasks $\left\langle v_x, t_x \right\rangle $ (equation %\ref{eqn:task_adj})}
%\STATE {\textsc{EnQueue} ($W$, $\left\langle v_x, t_x \right\rangle $ ) }
The complete pseudocode of our algorithm for counting triangles with an efficient dynamic load balancing scheme is presented in Fig. \ref{algo:dynamic}.

%%%%%%%%% Complete pseudocode for the triangle counting algo with DLB %%%%%%%%%%%%%%%%
\begin{figure}[!ht]
\begin{center}
\fbox{
\begin{minipage}[c] {0.94\linewidth}
\begin{algorithmic}[1]
%\FOR {each processor $i$, in parallel,}
 %   \STATE $G(V, E) \leftarrow \textsc{ReadGraph}()$
%\ENDFOR
\STATE {\textbf{All processors initially do the following:}}
\STATE {\textbf{Determine} \textit{initial} tasks\hspace*{.05in}(see discussion of Eqn. \ref{eqn:init_task}) }
\STATE 

\STATE {\textbf{The \textit{coordinator} does the following:}}
%\STATE {\textbf{Determine} initial tasks cooperating with workers\hspace*{.05in}(see discussion of eqn. \ref{eqn:init_task}) }
\STATE {$W\leftarrow \emptyset$}
\FOR {all remaining tasks $\left\langle v, t \right\rangle $ } 
		\STATE {\textsc{EnQueue} ($W$, $\left\langle v, t \right\rangle $ ) }
\ENDFOR
%\STATE $K \leftarrow 1$
%\STATE $BlockSize = \frac{\sum_i{d_i}}{2*(P-1)*(2+K)}$
%\STATE After each $(P-1)$ entries, increment $K$ and calculate blocks

\WHILE {$W$ is not $\emptyset$}
	\STATE {\textbf{Receive} \textit{task} requests $\left\langle i \right\rangle $ }	
	%\STATE $w \leftarrow$ an unassigned \textit{task} $W[x]$
	\STATE {$\left\langle v, t \right\rangle \leftarrow$ \textsc{DeQueue} ($W$)}
	\STATE {\textbf{Send} message $\left\langle v, t \right\rangle$ to worker $i$} 
\ENDWHILE
\STATE {\textbf{Send} $\left\langle terminate \right\rangle$ to proc. $i$ for requests $\left\langle i \right\rangle $ }
\STATE
 
\STATE {\textbf{Each \textit{worker} $i$ does the following:}}
\STATE $T_i \leftarrow 0$
\STATE {$T_i \leftarrow T_i + \textsc{CountTriangles}(v, t)$ \hspace*{.01in} //for \textit{initial} task}
\WHILE {\textit{worker} $i$ is \textit{idle}}
	\STATE {\textbf{Send} message  $\left\langle i \right\rangle $ to \textit{coordinator}}
	\STATE {\textbf{Receive} message  $M$ from \textit{coordinator}}
	\IF {$M$ is $\left\langle terminate \right\rangle$}
		\STATE {\textbf{Stop} execution}
	\ELSIF {$M$ is a \textit{task} $\left\langle v, t \right\rangle $}
		\STATE $T_i \leftarrow T_i + \textsc{CountTriangles}(v, t)$
	\ENDIF
\ENDWHILE
\STATE
\STATE \textsc{MpiBarrier}
\STATE Find Sum $T \leftarrow \sum_i{T_i}$ using $\textsc{MpiReduce}$
\RETURN  $T$
\end{algorithmic}
\end{minipage}
}
\end{center}
\fsc
\caption{An algorithm for counting triangles with dynamic load balancing.  }
\label{algo:dynamic}
\afsc
\end{figure}

\subsection{Correctness of the Algorithm}
\label{sec:dynamic-correctness}

We establish the correctness of our algorithm as follows. Consider a triangle $(x_1, x_2, x_3)$ with $x_1 \prec x_2 \prec x_3$, without the loss of generality. Now, the triangle is counted only when $x_1 \in \{v, v+1, \dots, v+t-1 \}$ for some task $\left\langle v, t \right\rangle $. The triangle is never counted again since $x_1 \notin N_{x_2}$ and $x_1,x_2 \notin N_{x_3}$ by the construction of $N_x$ (Line 1-3 in Fig. \ref{algo:serial}).

\subsection{Performance}
\label{sec:dynamic-perf}

In this section, we present the experimental evaluation of our parallel  algorithm for counting triangles with dynamic load balancing. 

\begin{figure}[!ht]
\centering
\includegraphics[scale=0.45]{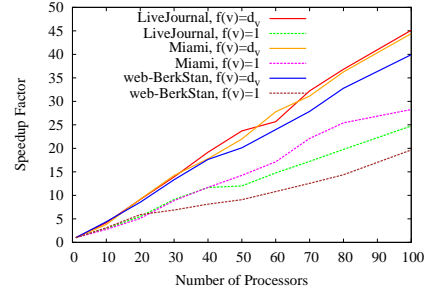}
\fsc
\caption{Speedup factors of our algorithm on Miami, LiveJournal and web-BerkStan networks with both $f(v)=1$  and $f(v) = d_v$ cost functions. }
\label{fig:dlb_speed_comp}
\afsc
\end{figure}

\begin{figure*}[!htb]
\hfill
\begin{minipage}[t]{.32\textwidth}
\begin{center}
\centerline{\includegraphics[width=1.0\textwidth]{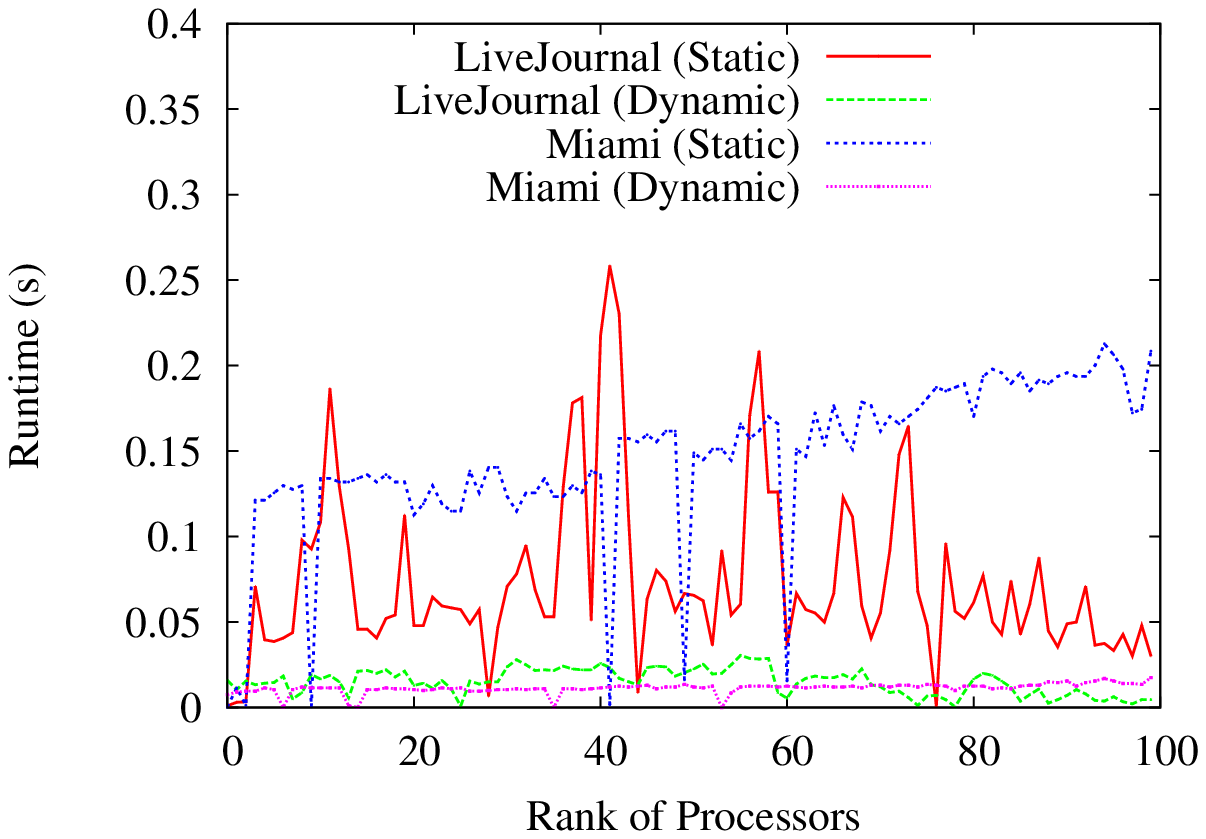}}
\fsc
\caption{Idle time of processors with both static and dynamic task granularity. When task granularity is adjusted dynamically idle times decrease significantly leading to smaller runtime.}
\label{fig:idle_vs_total}
\afsc
\end{center}
\end{minipage}
\hfill
\begin{minipage}[t]{.32\textwidth}
\begin{center}
\centerline{\includegraphics[width=1.0\textwidth]{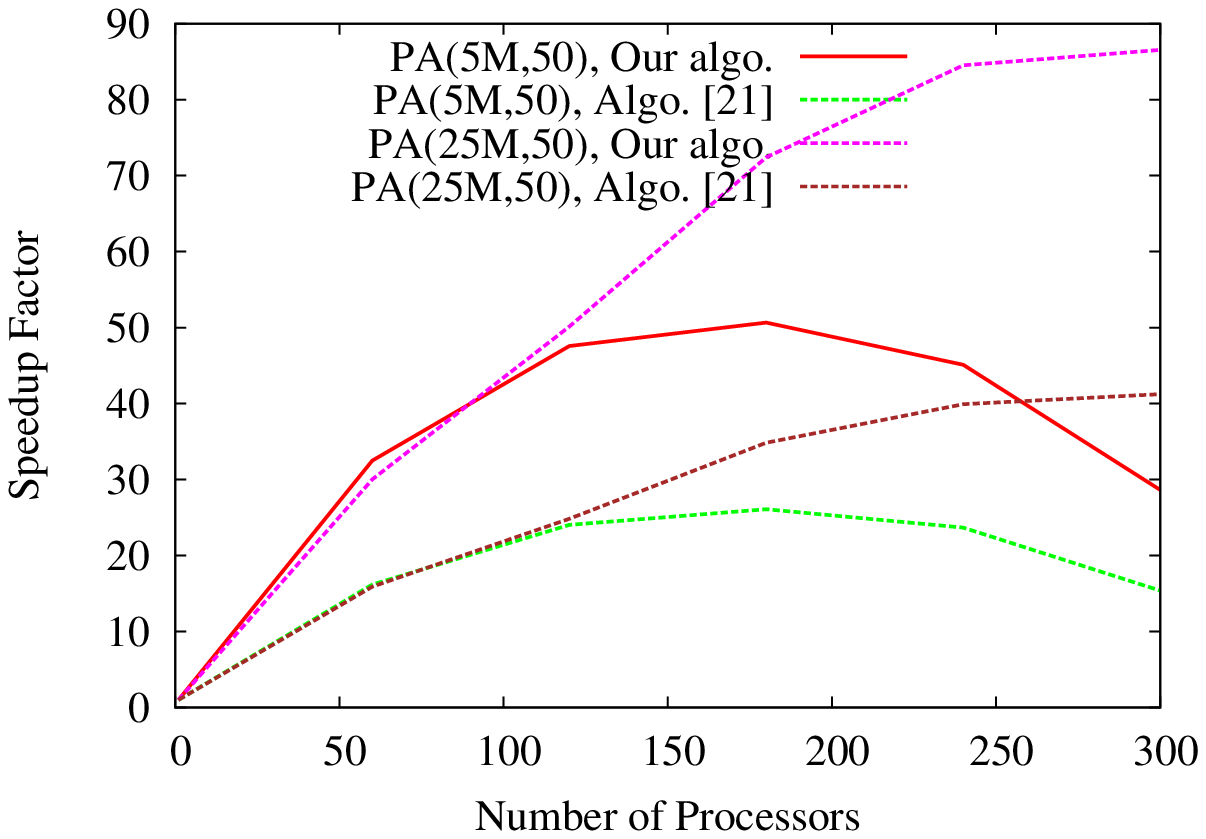}}
\fsc
\caption{Improved scalability of our algorithm with increasing network size. Further, our algorithm achieves higher speedups than \cite{Patric-triangle}.}
\label{fig:proc_scale}
\afsc
\end{center}
\end{minipage}
\hfill
\begin{minipage}[t]{.32\textwidth}
\begin{center}
\centerline{\includegraphics[width=1.0\textwidth]{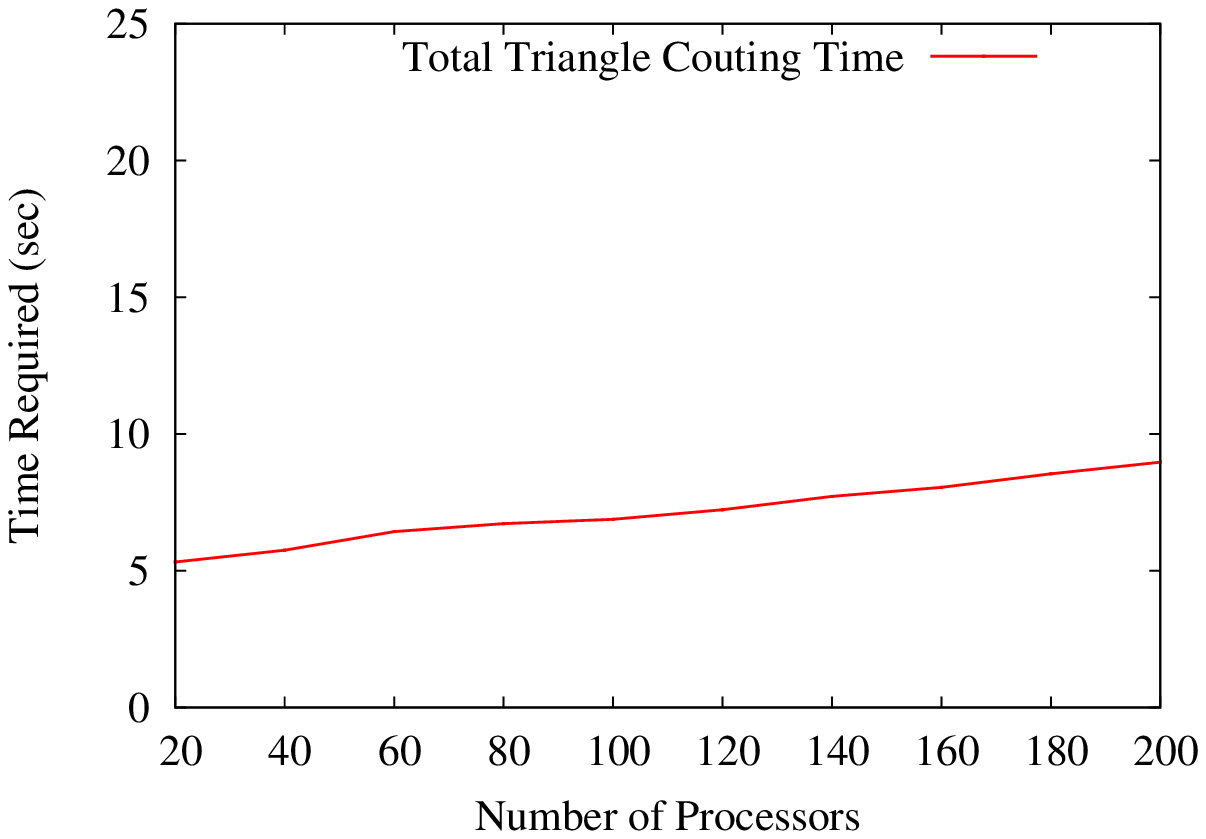}}
\fsc
\caption{Weak scaling of our algorithm. A very slowly increasing runtime suggests a good weak scaling of the algorithm.}
\label{fig:weak}
\afsc
\end{center}
\end{minipage}
\hfill
\vspace{-5pt}
\end{figure*}

\paragraph*{Strong Scaling} 

We present the strong scaling of our algorithm on Miami, LiveJournal, and web-BerkStan networks with both cost functions $f(v)=1$ and $f(v) = d_v$ in Fig. \ref{fig:dlb_speed_comp}.  Speedup factors are significantly higher with the function $f(v) = d_v$ than with $f(v) = 1$. The function $f(v)=1$ refers to equal cost of counting triangles for all nodes whereas the function $f(v) = d_v$ relates the cost to the degree of $v$. Distributing tasks based on the sum of degrees of nodes (Eqn. \ref{eqn:init_task} and \ref{eqn:task_adj}) reduces the effect of skewness of degrees and makes tasks more balanced leading to higher speedups. Our next experiments will be based on cost function $f(v) = d_v$.

%improves the performance of the algorithm. Our next experiments will be based on cost function $f(v) = d_v$.

\iffalse
Strong scaling of a parallel algorithm shows how much speedup a parallel algorithm gains as the number of processors increases. Fig. \ref{fig:dlb_speed_comp} shows strong scaling of the algorithm on Miami, LiveJournal, and web-BerkStan networks with both size functions $s(v)=1$ and $s(v) = d_v$. Our algorithm achieves an almost linear speedup for all networks with both size functions. However, speedup factors are significantly higher with the function $s(v) = d_v$ than with the other function. Recall that size functions determine how computing tasks are distributed as presented in equation \ref{eqn:init_task} and \ref{eqn:task_adj}. 
Now, networks may have some nodes with very high degrees. If task size is determined by number of nodes, the computing loads of tasks will not be balanced. On the other hand, distributing tasks based on the sum of degrees of nodes reduces the effect of such skewness of degrees. As a result, our algorithm achieves better speedup with the size function $s(v) = d_v$. Our next experiments will be based on size function $s(v) = d_v$.
\fi

\paragraph*{Comparison with Previous Algorithms} 
We compare the runtime of our parallel algorithm with the algorithm in \cite{Patric-triangle} on a number of real and artificial networks. As shown in Table \ref{table:dynamic_patric}, our algorithm is is more than 2 times faster than \cite{Patric-triangle} for all these networks.
The algorithm in \cite{Patric-triangle} is based on static partitioning whereas our algorithm employs a dynamic load balancing scheme to reduce idle time of processors leading to improved performance.

\iffalse
We compare the runtime of our parallel algorithm with a very recent distributed-memory parallel algorithm for counting exact number of triangles given in \cite{Patric-triangle}. We experimented on a number of real and artificial networks, and for all these networks, our algorithm is is more than 2 times faster than the algorithm in \cite{Patric-triangle}. The improvement over \cite{Patric-triangle} is due to the fact that algorithm in \cite{Patric-triangle} employs a static load balancing scheme that cannot take care of processors' idleness in execution time. However, our algorithm dynamically re-assigns tasks to idle processors without introducing much overhead. 
\fi
%Though, the static load balancing scheme in \cite{Patric-triangle} is very efficient as it estimates the actual cost of triangle counting by some approximation functions, the load across processors are not perfectly balanced (see \cite{Patric-triangle} for details). 

\begin{table}[!hb]
\centering
\caption{Runtime performance of our algorithm and algorithm \cite{Patric-triangle}.}
\label{table:dynamic_patric}
\begin{tabular}{|l|l|l|l|} \hline
\multirow{2}{*}{ \textbf{Networks}} &  \multicolumn{2}{|c|}{\textbf{Runtime }} & \multirow{2}{*}{\textbf{Triangles}} \\
\cline{2-3}
 & \cite{Patric-triangle} & Our algo. &  \\ \hline
	web-BerkStan  &    $0.10$s  & $0.041$s  & $65$M    \\ \hline %200 proc\\ \hline
	LiveJournal   &    $0.8$s    & $0.384$s  & $286$M   \\ \hline %200 proc\\ \hline
	Miami         &    $0.6$s     & $0.301$s & $332$M    \\ \hline %200 proc\\ \hline
	PA(20M, 50)         &    $11.85$s     & $5.241$s &  $0.028$M  \\ \hline %200 proc\\ \hline
    \end{tabular}
\end{table}

\paragraph*{Effect of Dynamic Adjustment of Task Granularity}
We show how the granularity of tasks affects idle time of worker processors for Miami and LiveJournal networks. As Fig. \ref{fig:idle_vs_total} shows, with tasks of static size, some processors have very large idle times. However, when task granularity is dynamically adjusted,  idle times of processors become very small leading to balanced load among processors. This consequently improves the runtime performance of the algorithm. 
 %For this experiment on Miami network, total runtime with dynamically adjusted tasks is $0.3s$s whereas the runtime with tasks of constant size is $0.49$s.  

\paragraph*{Scaling with Processors and Network Size} 

Our algorithm scales to a higher number of processors with increasing size of networks, as shown in Fig. \ref{fig:proc_scale}. Scaling of our algorithm with number of processors is very comparable to that of \cite{Patric-triangle}. However, our algorithm achieves significantly higher speedup factors than \cite{Patric-triangle}.

%\paragraph*{Comparison with other algorithms}
%\red{can we compare this algorithm with other algorithms?}.

\paragraph*{Weak Scaling}
The weak scaling of our algorithm is shown in Fig. \ref{fig:weak}. With the addition of processors, communication overhead increases since idle workers exchange messages with the coordinator for new tasks. However, since the overhead for requesting and assigning tasks is very small, the increase of runtime with additional processors is very slow. Thus, the weak scaling of our algorithm is very good.

\iffalse  
\begin{figure}[!ht]
\centering
\includegraphics[scale=0.45]{weak}
\fsc
\caption{Weak scaling of the parallel algorithm for counting triangles with dynamic load balancing.}
\label{fig:weak}
\afsc
\end{figure}
\fi

\iffalse 
Now consider the following scenario: at the very last phase of execution, when the coordinator has no more unassigned tasks, i.e., $W = \emptyset$, some workers are computing their last task and some are idle. Assume, time required by some worker to compute the last completed task is $q$. The amount of time a worker remains idle, denoted by a continuous random variable $X$, can be assumed to be uniformly distributed over the interval $[0,q]$, i.e., $X \sim U(0,q)$. Since $E[X]=q/2$, a worker remains idle for $q/2$ amount of time on average.   

If the size of tasks $S_x$ is large, time $q$ required to complete the last task becomes large, and consequently, idle time $q/2$ also grows large. Thus, if $S_x$ decreases, the idle time is expected to decrease. However, if $S_x$ is very small, total number of tasks becomes large, which increases number of communications required for task requests and re-assignments.

Therefore, instead of keeping $S_x$ constant throughout the execution, the algorithm adjusts $S_x$ dynamically. 
\fi

%\input{performance}
%\input{sparsification}

\section{Conclusion}
\label{sec:conc}

We present two parallel algorithms for counting triangles in networks with large degrees. Our space-efficient algorithm works on networks that do not fit into the main memory of a single computing machine. The algorithm partitions the network into non-overlapping subgraphs and reduces the memory requirement significantly leading to the ability to work with even larger networks. This algorithm is not only useful for networks with large degrees, it is equally useful for networks with small degrees when it comes to working with massive networks. We believe that for emerging massive networks, this algorithm will prove very useful. When the main memory of each computing machine is large enough to store the whole network, our parallel algorithm with dynamic load balancing can be used for faster analysis of the network.

\iffalse
Assuming the main memory is large enough to store the whole network, our second parallel algorithm balances computational load dynamically and improves the runtime performance significantly. Both algorithms scales well to large networks and large number of processors, and perform well for both real and artificial networks with large degrees. 
\fi

%\input{labeled-triangle}
%\input{extra}

%ACKNOWLEDGMENTS are optional
%\section{Acknowledgments}

% The following two commands are all you need in the
% initial runs of your .tex file to
% produce the bibliography for the citations in your paper.

%\vspace{-0.1in}

\small
%\bibliographystyle{abbrv}
%\bibliography{CCpaper}  % sigproc.bib is the name of the Bibliography in this case
\bibliographystyle{./IEEEtran}
\bibliography{./IEEEabrv,./triangle}

% Generated by IEEEtran.bst, version: 1.12 (2007/01/11)
\begin{thebibliography}{10}
\providecommand{\url}[1]{#1}
\csname url@samestyle\endcsname
\providecommand{\newblock}{\relax}
\providecommand{\bibinfo}[2]{#2}
\providecommand{\BIBentrySTDinterwordspacing}{\spaceskip=0pt\relax}
\providecommand{\BIBentryALTinterwordstretchfactor}{4}
\providecommand{\BIBentryALTinterwordspacing}{\spaceskip=\fontdimen2\font plus
\BIBentryALTinterwordstretchfactor\fontdimen3\font minus
  \fontdimen4\font\relax}
\providecommand{\BIBforeignlanguage}[2]{{%
\expandafter\ifx\csname l@#1\endcsname\relax
\typeout{** WARNING: IEEEtran.bst: No hyphenation pattern has been}%
\typeout{** loaded for the language `#1'. Using the pattern for}%
\typeout{** the default language instead.}%
\else
\language=\csname l@#1\endcsname
\fi
#2}}
\providecommand{\BIBdecl}{\relax}
\BIBdecl

\bibitem{alon-network-motifs}
R.~Milo, S.~Shen-Orr \emph{et~al.}, ``Network motifs: simple building blocks of
  complex networks.'' \emph{Science}, vol. 298, no. 5594, pp. 824--827, October
  2002.

\bibitem{chu11triangle}
S.~Chu and J.~Cheng, ``Triangle listing in massive networks and its
  applications,'' in \emph{Proc. of KDD}, 2011.

\bibitem{socialMiller}
M.~McPherson, L.~Smith-Lovin, and J.~Cook, ``Birds of a feather: Homophily in
  social networks,'' \emph{Annual Rev. of Soc.}, vol.~27, no.~1, pp. 415--444,
  2001.

\bibitem{BECC}
L.~Becchetti, P.~Boldi, C.~Castillo, and A.~Gionis, ``Efficient semi-streaming
  algorithms for local triangle counting in massive graphs,'' in \emph{Proc. of
  KDD}, 2008.

\bibitem{eckmann2002theme}
J.~Eckmann and E.~Moses, ``Curvature of co-links uncovers hidden thematic
  layers in the world wide web,'' \emph{Proc. Natl. Acad. of Sci. USA},
  vol.~99, no.~9, pp. 5825--5829, 2002.

\bibitem{YOSSEFF}
Z.~Bar-Yosseff, R.~Kumar, and D.~Sivakumar, ``Reductions in streaming
  algorithms, with an application to counting triangles in graphs,'' in
  \emph{Proc. of SODA}, 2002.

\bibitem{Broder-web}
A.~Broder, R.~Kumar, F.~Maghoul, P.~Raghavan, S.~Rajagopalan, R.~Stata,
  A.~Tomkins, and J.~Wiener, ``Graph structure in the web,'' \emph{Computer
  Networks}, vol.~33, no. 1–6, pp. 309 -- 320, 2000.

\bibitem{twitter}
H.~Kwak, C.~Lee, H.~Park, and S.~Moon, ``What is twitter, a social network or a
  news media?'' in \emph{Proc. of WWW}, 2010.

\bibitem{collaborationNet}
M.~Newman, ``Coauthorship networks and patterns of scientific collaboration,''
  \emph{Proc. Natl. Acad. of Sci. USA}, vol. 101, no.~1, pp. 5200--5205, April
  2004.

\bibitem{bioNet}
M.~Girvan and M.~Newman, ``{Community structure in social and biological
  networks},'' \emph{Proc. Natl. Acad. of Sci. USA}, vol.~99, no.~12, pp.
  7821--7826, Jun. 2002.

\bibitem{ApteBusinessData}
C.~Apte, B.~Liu, E.~Pednault, and P.~Smyth, ``Business applications of data
  mining,'' \emph{Commun. ACM}, vol.~45, no.~8, pp. 49--53, Aug. 2002.

\bibitem{ChenBioData}
J.~Chen and S.~Lonardi, \emph{Biological Data Mining}, 1st~ed.\hskip 1em plus
  0.5em minus 0.4em\relax Chapman \& Hall/CRC, 2009.

\bibitem{tsourakakis2009doulion}
C.~Tsourakakis, U.~Kang, G.~Miller, and C.~Faloutsos, ``Doulion: counting
  triangles in massive graphs with a coin,'' in \emph{Proc. of KDD}, 2009.

\bibitem{ALON}
N.~Alon, R.~Yuster, and U.~Zwick, ``Finding and counting given length cycles,''
  \emph{Algorithmica}, vol.~17, pp. 209--223, 1997.

\bibitem{SCHANK}
T.~Schank, ``Algorithmic aspects of triangle-based network analysis,'' Ph.D.
  dissertation, University of Karlsruhe, 2007.

\bibitem{LATAPY}
M.~Latapy, ``Main-memory triangle computations for very large (sparse
  (power-law)) graphs,'' \emph{Theor. Comput. Sci.}, vol. 407, pp. 458--473,
  2008.

\bibitem{SURI}
S.~Suri and S.~Vassilvitskii, ``Counting triangles and the curse of the last
  reducer,'' in \emph{Proc. of WWW}, 2011.

\bibitem{seshadhri13triangle}
C.~Seshadhri, A.~Pinar, and T.~Kolda, ``Triadic measures on graphs: the power
  of wedge sampling,'' in \emph{Proc. of SDM}, 2013.

\bibitem{triangle-paradox}
M.~Jha, C.~Seshadhri, and A.~Pinar, ``A space efficient streaming algorithm for
  triangle counting using the birthday paradox,'' in \emph{Proc. of KDD}, 2013.

\bibitem{srikanta-cikm}
K.~Tangwongsan, A.~Pavan, and S.~Tirthapura, ``Parallel triangle counting in
  massive streaming graphs,'' in \emph{Proc. of CIKM}, 2013.

\bibitem{Patric-triangle}
S.~Arifuzzaman, M.~Khan, and M.~Marathe, ``{PATRIC}: A parallel algorithm for
  counting triangles in massive networks,'' in \emph{Proc. of CIKM}, 2013.

\bibitem{pinar-mapreduce}
T.~G. Kolda, A.~Pinar, T.~Plantenga, C.~Seshadhri, and C.~Task, ``Counting
  triangles in massive graphs with mapreduce,'' \emph{CoRR}, vol.
  abs/1301.5887, 2013.

\bibitem{park-cikm}
H.-M. Park and C.-W. Chung, ``An efficient mapreduce algorithm for counting
  triangles in a very large graph,'' in \emph{Proc. of CIKM}, 2013.

\bibitem{twitterData}
``{Twitter Data},''
  \url{http://an.kaist.ac.kr/~haewoon/release/twitter_social_graph}, 2010,
  [Online].

\bibitem{SNAP}
\BIBentryALTinterwordspacing
SNAP. (2012) Stanford network analysis project.
  \url{http://snap.stanford.edu/}. [Online]. Available:
  \url{http://snap.stanford.edu/}
\BIBentrySTDinterwordspacing

\bibitem{miamiRef}
C.~Barrett, R.~Beckman \emph{et~al.}, ``Generation and analysis of large
  synthetic social contact networks,'' in \emph{WSC}, 2009.

\bibitem{barabasi99}
A.~Barabasi and R.~Albert, ``Emergence of scaling in random networks,''
  \emph{Science}, vol. 286, pp. 509--512, 1999.

\bibitem{fb-stat}
J.~Ugander, B.~Karrer, L.~Backstrom, and C.~Marlow, ``The anatomy of the
  facebook social graph,'' \emph{CoRR}, vol. abs/1111.4503, 2011.

\end{thebibliography}
\end{document}